\theoremstyle{definition}
\newtheorem{definition}{Definition}
\newtheorem{prop}{Proposition}
\newtheorem{theorem}{Theorem}
\theoremstyle{remark}
\newtheorem{remark}{Remark}
\newcommand{\removelatexerror}{\let\@latex@error\@gobble}
\def\changeBibColor#1{%
  \in@{#1}{chen2023deep,fag2011on,mu2018comment,venzke2020inexact,liu2019acopf,mhanna2018adaptive,10.5555/2969442.2969497,brust2022con,jere2016sto,moustapha2022active,klink2022efficient,peng2021multi,glover2012power,kyri2019joint,church1951george, boole_2009,lof1992fast,bompard1996dynamic}
  \ifin@\color{black}\else\normalcolor\fi
}
\begin{document}

\title{Computationally Enhanced Approach for Chance-Constrained OPF Considering \\Voltage Stability}

\author{Yuanxi~Wu,~\IEEEmembership{Graduate~Student~Member,~IEEE,}
        Zhi~Wu,~\IEEEmembership{Senior~Member,~IEEE,}
        Yijun~Xu,~\IEEEmembership{Senior~Member,~IEEE,}
        Huan~Long,~\IEEEmembership{Member,~IEEE,}
        Wei~Gu,~\IEEEmembership{Senior~Member,~IEEE,}
        Shu~Zheng,
        and~Jingtao~Zhao
        \thanks{This work has been submitted to the IEEE for possible publication. Copyright may be transferred without notice, after which this version may no longer be accessible.}
        \thanks{The study was supported by the science and technology project of the State Grid Corporation of China, Grant 5108-202218280A-2-366-XG.}
\thanks{Yuanxi Wu, Zhi Wu, Yijun Xu, Huan Long, Wei Gu, and Shu Zheng are with the School
of Electrical Engineering, Southeast University, Nanjing 210096, China (e-mail: yuanxi\underline{~~}wu@seu.edu.cn; zwu@seu.edu.cn; yijunxu@seu.edu.cn; hlong@seu.edu.cn; wgu@seu.edu.cn; zhengshu@sgepri.sgcc.com.cn).}
\thanks{Jingtao Zhao is with State Grid Electric Power Research Institute, Nanjing 211102, China (e-mail: zhaojingtao@sgepri.sgcc.com.cn).}
}

 \markboth{SUBMITTED FOR REVIEW}%
 {Shell \MakeLowercase{\textit{et al.}}: Bare Demo of IEEEtran.cls for IEEE Journals}

\maketitle

\begin{abstract}
The effective management of stochastic characteristics of renewable power generations is vital for ensuring the stable and secure operation of power systems. This paper addresses the task of optimizing the chance-constrained voltage-stability-constrained optimal power flow (CC-VSC-OPF) problem, which is hindered by the implicit voltage stability index and intractable chance constraints.
Leveraging a neural network (NN)-based surrogate model, the stability constraint is explicitly formulated and \textcolor{black}{directly} integrated into the model. To perform uncertainty propagation without relying on presumptions or complicated transformations, an advanced data-driven method known as adaptive polynomial chaos expansion (APCE) is developed. To extend the scalability of the proposed algorithm, a partial least squares (PLS)-NN framework is designed, which enables the establishment of a parsimonious surrogate model and efficient computation of large-scale Hessian matrices. In addition, a dimensionally decomposed APCE (DD-APCE) is proposed to alleviate the "curse of dimensionality" by restricting the interaction order among random variables. Finally, the above techniques are merged into an iterative scheme to update the operation point.
Simulation results reveal the cost-effective performances of the proposed method in several test systems. 
\end{abstract}

\begin{IEEEkeywords}
Optimal power flow, chance constraint, PLS, neural network, dimensionally decomposed APCE, data-driven.
\end{IEEEkeywords}

\IEEEpeerreviewmaketitle

\section{Introduction}\label{introduction}
\IEEEPARstart{A}{s} the world shifts up a gear in the development of renewable energy resources, uncertainties in power systems inevitably arise since renewable generations can hardly be forecasted with high accuracy \cite{wang2023two}.
Deterministic AC optimal power flow (OPF), although widely adopted in the traditional power industry to achieve reliable system operations, is not intended for such circumstances. Failing to incorporate stochastic features would result in significant degradation of system performance \cite{yang2023tracking}, which urges the necessity of moving beyond this classical setting.  

Different approaches exist to immunize power systems against uncertain injections. One intensively studied type is robust optimization \cite{qiu2020historical}, which enforces the power system to remain on the safe side for all possible uncertain-but-bounded perturbations. However, this worst-case-oriented philosophy is inclined to deliver an over-conservative solution, thus rendering itself less preferable. Much recent research effort has been directed to chance-constrained AC-OPF (CC-AC-OPF), which requires the uncertainty-affected constraints to be satisfied with a predefined probability. While capable of promoting operational efficiency, CC-AC-OPF is, more often than not, a heavily problematic task \cite{pena2020solving}. The crux of its solution lies in the \textit{ efficient evaluation of chance constraints}, which constitutes the first research concern of this paper. 

In light of this challenge, the traditional Monte-Carlo (MC) simulation offers a straightforward and conceptually clear-cut solution \cite{zhang2010probabilistic}. However, this method is extremely time-consuming, even for moderately sized problems. Du \textit{et al.} \cite{du2021chance} adopt the point estimation method to avoid massive sampling. Notwithstanding the improved efficiency, the accuracy of this method is not guaranteed, and its application to non-Gaussian random variables with nonlinear correlations is quite limited. 

Alternatively, the analytical method offers a viable approach for transforming chance constraints into tractable formulations. Early research relies on strong assumptions regarding characteristics of renewable generations, such as Gaussianity and independence. Instead, Yang \textit{et al.} \cite{yang2020analytical} embrace the adoption of the Gaussian mixture model to enable general characterizations of non-Gaussian uncertainties. Another widely adopted technique is the distributionally robust model, which utilizes certain statics of uncertainties \cite{roald2018chance,li2019distribution,arab2022distributionally,arrigo2022embed}. It is notable, however, that most of the studies cited above resort to approximations of AC power flow (e.g., linearized or relaxed power flow).

Recently, meta-modeling techniques have received significant attention due to their ability to inexpensively deal with uncertainty in AC power systems. \textcolor{black}{One choice is neural-network-based surrogate model. For example, Chen \textit{et al.} \cite{chen2023deep} propose a deep-quantile-regression-based model to tackle joint chance-constrained OPF. Nevertheless, it typically demands a substantial number of historical operational samples and a hyperparameter tuning process to achieve acceptable accuracy. This paper will focus on a more interpretable approach based on variants of polynomial chaos expansion (PCE).} Despite previous successful attempts to apply PCE to CC-OPF \cite{li2018compressive,muhlpfordt2019chance,metivier2020efficient,xu2021iterative}, a common trait of these works is that they have simplified the underlying dependence structure of random variables to varying degrees.
While recent studies in probabilistic power flow have made progress via the copula technique and Rosenblatt transformation
 \cite{ye2022generalized,ly2022scalable}, selecting an appropriate copula for modeling the joint probability distribution can prove difficult. Furthermore, the highly nonlinear Rosenblatt transformation risks compromising the performance of PCE in practical applications \cite{lee2022reliability}. A promising avenue for improving PCE is the direct construction of multivariate orthogonal polynomials with respect to the original probabilistic space. To the best of our knowledge, this approach is yet to receive sufficient exploration.

 In addition, voltage stability is crucial to ensure the robustness of OPF solutions against system instability. Traditionally, voltage-stability-constrained OPF (VSC-OPF) has been the preferred method to accomplish this goal \cite{jia2023voltage,cui2018new,avalos2008practical}. VSC-OPF enforces a predetermined threshold on the voltage stability index (VSI), typically represented by the minimum singular value of the power flow Jacobian matrix. In the context of high renewable penetration, the system state becomes more volatile, leading to an increased risk of static voltage instability.
 However, incorporating voltage stability constraints becomes more intricate when VSI is subjected to uncertainties. Selecting an appropriate operation point to ensure the probability distribution function of VSI meets the secure operation requirement is a challenging problem.
 Currently, there is very limited research on stability issues in CC-OPF problems, with only a few studies considering small-signal stability \cite{wang2023stability} or static voltage stability margin \cite{yang2022optimization}. 
 Hence, further investigation is necessary to advance the field of \textit{CC-VSC-OPF}, which constitutes the second research concern of this paper. 

In response to these pressing concerns, this study applies a novel adaptive polynomial chaos expansion (APCE) \cite{lee2020practical} to the CC-VSC-OPF problem, where a neural network (NN) regression model is adopted as a surrogate for the voltage stability constraint. The proposed APCE method facilitates stable computation of orthonormal polynomial basis even under arbitrarily distributed and complexly correlated uncertainties. These basis functions are ultimately integrated into an iterative framework to obtain the optimal operation point. Partial least squares (PLS) and a dimensionwise decomposition technique are further incorporated to improve the scalability of the proposed approach. The contributions of this study are summarized as follows:
\begin{itemize}
\item A novel CC-VSC-OPF model is designed with the consideration of voltage stability constraint and uncertain renewable generations, thereby maintaining stable and secure operations of power systems. 
\item \textcolor{black}{This study merges the NN-based surrogate model and APCE into an iterative scheme to deal with the implicit VSI and chance constraints without relying on simplified power flow models.}
\item \textcolor{black}{The scalability of the proposed algorithm is extended by introducing the PLS-NN framework. A dimensionally decomposed APCE (DD-APCE) is further devised to enable the implementation of the proposed algorithm under high-dimensional uncertainties.} 
\item \textcolor{black}{The proposed algorithm is applicable to general cases since it is entirely data-driven and distribution-free.}
\end{itemize}
The remainder of this paper is organized as follows: Section \ref{preliminary} introduces the formulation of CC-VSC-OPF. Section \ref{approach} presents the data-driven APCE and its dimensionally decomposed version. Section \ref{scheme} outlines the PLS-NN framework, along with the whole solution procedure for CC-VSC-OPF. Numerical experiments are conducted in Section \ref{simulation}, and conclusions follow in Section \ref{conclusion}. 
\section{Preliminaries}\label{preliminary}
This section first outlines the formulation of CC-VSC-OPF by imposing a lower bound on the VSI. Then the NN-based surrogate model is constructed to incorporate the voltage stability constraint into the optimization process explicitly.  
\vspace{5pt}
\subsection{Formulation of CC-VSC-OPF}
Consider an electrical network constituted by a set \(\mathcal{N}\) of \(N=\left|\mathcal{N}\right|\) buses and a set \(\mathcal{L}\) of \(L=\left|\mathcal{L}\right|\) lines. Subscripts $_{PQ}$, $_{PV}$, and $_{ref}$ are added to classify three types of buses, namely PQ, PV, and reference buses.\textcolor{black}{\footnote{\textcolor{black}{Detailed definitions of three types of buses should be referred to \cite{glover2012power}.}}} Let $v_i$ and $\theta_i$ denote the voltage magnitude and the voltage angle at each bus. To avoid messy formulas, it is assumed that the power injection at each bus $i\in\mathcal{N}$ can be decomposed into one controllable generation $p_{g,i},q_{g,i}$, one fixed demand $p_{d,i},q_{d,i}$, and one uncertain renewable generation $\xi_{i}$\footnote{The renewable energy stations are assumed to be operated at unity power factor here. However, other types of control can be incorporated as well.}. \textcolor{black}{Let a random vector $\boldsymbol{\xi}\coloneqq[\xi_1,\xi_2,\dots,\xi_N]^{\mathsf{T}}$ comprise uncertain renewable generations at each bus. Considering the interdependence between system variables (i.e., $p_{g,i},q_{g,i},v_i$ and $\theta_i$) and $\boldsymbol{\xi}$, during real-time operations, the values of these variables become dependent on the realizations of $\boldsymbol{\xi}$.\textcolor{black}{\footnote{\textcolor{black}{For example, according to the Automatic Generation Control, the power mismatch will cause changes in the power output of generators at PV and reference buses. More details can be found in \cite{roald2018chance}.}}} This implies that all these variables can be expressed as functions of $\boldsymbol{\xi}$.} Thus, the net active power injection at bus \textit{i} can be expressed as follows:
\begin{equation}\label{eq:1}
    p_i(\boldsymbol{\xi})=p_{g,i}(\boldsymbol{\xi})+\xi_{i}-p_{d,i},\quad \forall i \in \mathcal{N},
\end{equation}
Similarly, the net reactive power injection at bus \textit{i} is given by:
\begin{equation}\label{eq:2}
    q_i(\boldsymbol{\xi})=q_{g,i}(\boldsymbol{\xi})-q_{d,i},\quad \forall i \in \mathcal{N},
\end{equation}

The electrical network is governed by the AC nodal power balance equations, which are expressed as functions of bus injections and bus voltages in polar coordinates: 
\begin{subequations}\label{eq:4}
    \begin{alignat}{2}        
    p_i(\boldsymbol{\xi})=&v_i(\boldsymbol{\xi})\sum\limits_{j=1}^{N}v_j(\boldsymbol{\xi})\Bigl(G_{ij}\cos(\theta_{i}(\boldsymbol{\xi})-\theta_{j}(\boldsymbol{\xi})) \notag\\
    &+B_{ij}\sin(\theta_{i}(\boldsymbol{\xi})-\theta_{j}(\boldsymbol{\xi}))\Bigr),\quad \forall i \in\mathcal{N}, \\
    q_i(\boldsymbol{\xi})=&v_i(\boldsymbol{\xi})\sum\limits_{j=1}^{N}v_j(\boldsymbol{\xi})\Bigl(G_{ij}\sin(\theta_{i}(\boldsymbol{\xi})-\theta_{j}(\boldsymbol{\xi})) \notag\\
    &-B_{ij}\cos(\theta_{i}(\boldsymbol{\xi})-\theta_{j}(\boldsymbol{\xi}))\Bigr),\quad \forall i \in\mathcal{N},
    \end{alignat}
\end{subequations}
where $G_{ij}$ and $B_{ij}$ represent the real and imaginary components of the bus admittance matrix.

At this point, we have all the ingredients to state the CC-VSC-OPF problem.
\begin{subequations}\label{eq:5}
\allowdisplaybreaks
    \begin{alignat}{3}
        &\!\min &\quad&\sum\limits_{i\in\mathcal{N}_{PV,ref}}\mathbb{E}\left[C_i(p_{g,i})\right]&\quad &  \label{eq:5a}\\
        &\rm{s.t.}  &\quad &\boldsymbol{g}(\boldsymbol{p}(\boldsymbol{\xi}),\boldsymbol{q}(\boldsymbol{\xi}),\boldsymbol{v}(\boldsymbol{\xi}),\boldsymbol{\theta}(\boldsymbol{\xi}))=0,&\quad& \label{eq:5b}\\
        & &\quad&v_i^{\min}\leq\ v_i\leq v_i^{\max}, &\quad&\forall i \in \mathcal{N}_{PV}, \label{eq:5c}\\
        & &\quad&\mathbb{P}(p_{g,i}(\boldsymbol{\xi})\geq p_{g,i}^{\min})\geq 1-\epsilon_P,&\quad &\forall i \in \mathcal{N}_{PV,ref}, \label{eq:5d}\\
        & &\quad&\mathbb{P}(p_{g,i}(\boldsymbol{\xi})\leq p_{g,i}^{\max})\geq 1-\epsilon_P,&\quad &\forall i \in \mathcal{N}_{PV,ref}, \label{eq:5e}\\
        & &\quad &\mathbb{P}(q_{g,i}(\boldsymbol{\xi})\geq q_{g,i}^{\min})\geq 1-\epsilon_Q,&\quad &\forall i \in \mathcal{N}_{PV,ref},\label{eq:5f}\\
        & &\quad &\mathbb{P}(q_{g,i}(\boldsymbol{\xi})\leq q_{g,i}^{\max})\geq 1-\epsilon_Q,&\quad &\forall i \in \mathcal{N}_{PV,ref},\label{eq:5g}\\
        &  &\quad &\mathbb{P}(v_{i}(\boldsymbol{\xi})\geq v_{i}^{\min})\geq 1-\epsilon_V,&\quad &\forall i \in \mathcal{N}_{PQ},\label{eq:5h}\\
        &  &\quad &\mathbb{P}(v_{i}(\boldsymbol{\xi})\leq v_{i}^{\max})\geq 1-\epsilon_V,&\quad &\forall i \in \mathcal{N}_{PQ},\label{eq:5i}\\
        &  &\quad &\mathbb{P}(i_{ij}(\boldsymbol{\xi})\leq i_{ij}^{\max})\geq 1-\epsilon_I, &\quad &\forall ij\in\mathcal{L}, \label{eq:5j}\\
        &  &\quad &\mathbb{P}(\sigma(\boldsymbol{\xi})\geq \sigma^{\min})\geq 1-\epsilon_{\sigma}, &\quad & \label{eq:5k}
    \end{alignat}
\end{subequations}
\textcolor{black}{where $\mathcal{N}_{PV,ref}$ represents the bus set including all PV and reference buses, $i_{ij}$ denotes the current magnitude from bus $i$ to bus $j$, and $\sigma$ denotes the VSI.}

The objective \eqref{eq:5a} minimizes the total expected cost of active power generation. \eqref{eq:5b} represents the compact form of nodal power balance equations. \eqref{eq:5c} places constraints on the voltage setpoints of PV buses. \eqref{eq:5d}-\eqref{eq:5k} impose chance constraints on other variables subject to uncertainties 
with a desired level of confidence (or reliability)
$1-\epsilon_{P/Q/V/I/\sigma}$ for some small quantities $\epsilon_{P/Q/V/I/\sigma}$. \textcolor{black}{Specifically, \eqref{eq:5d}-\eqref{eq:5g} constitute physical limits for the active and reactive power output of each generator. Similar to \eqref{eq:5c}, \eqref{eq:5h}-\eqref{eq:5i} enforces voltage magnitude limits on PQ buses. In order to prevent damage to transmission elements resulting from excessive currents, a maximum limit is imposed on the current magnitude value for each branch in \eqref{eq:5j}. \eqref{eq:5k} guarantees a minimum distance from the voltage instability point, where the VSI is denoted by $\sigma(\boldsymbol{\xi})$. It is acknowledged that voltage stability is primarily concerned with the singularity of the power flow Jacobian $\boldsymbol{J}$ \textcolor{black}{\cite{bompard1996dynamic}}. In this work, the VSI $\sigma$ is defined as the minimum singular value (MSV) of $\boldsymbol{J}$ since this value tends to zero as the system approaches the voltage collapse point \textcolor{black}{\cite{lof1992fast}}.}
\begin{remark}
    \textcolor{black}{Despite intuitive, the main drawback of the above MSV-based representation of VSI is that the MSV is an implicit function of system variables}, which renders the constraint \eqref{eq:5k} challenging to be directly embedded into the OPF problem.
\end{remark}
\subsection{Surrogate Model for Voltage Stability Index}
An alternative feasible approach is to establish a high-precision surrogate model to accurately express the VSI. Owing to its strong fitting ability, the Back Propagation NN (BPNN) is applied to obtain the approximate analytic expression of $\sigma$ in this study. 
Before proceeding with the establishment of the NN-based surrogate model, we first introduce two variable sets: The first set $\boldsymbol{y}\coloneqq\{p_{g,i},v_i\mid i\in\mathcal{N}_{PV,ref}\}$ contains all control variables and the other set $\boldsymbol{z}\coloneqq\{v_i,\theta_j\mid i\in\mathcal{N},j\in\mathcal{N}_{PV,PQ}\}$ includes all voltage magnitudes and angles.  

A high-quality NN model hinges on a large quantity of properly sampled training data. To this end, the Latin hypercube sampling (LHS) method is employed to generate sufficient eligible samples of $\boldsymbol{y}$. Subsequently, standard power flow analysis is carried out to determine the corresponding $\boldsymbol{z}$ and $\sigma$ samples.
Since the Jacobian $\boldsymbol{J}$ is directly related to voltage magnitudes and angles, this study adopts the samples of $\boldsymbol{z}$ and $\sigma$ as the voltage stability dataset $D=\{(\boldsymbol{z},\sigma)\}$. \textcolor{black}{Here, we provide the summarized steps as follows and the corresponding flowchart in Fig. \ref{fig:9} for the readers' convenience.}

\begin{figure}[!t]
\centering
\includegraphics[width=3in]{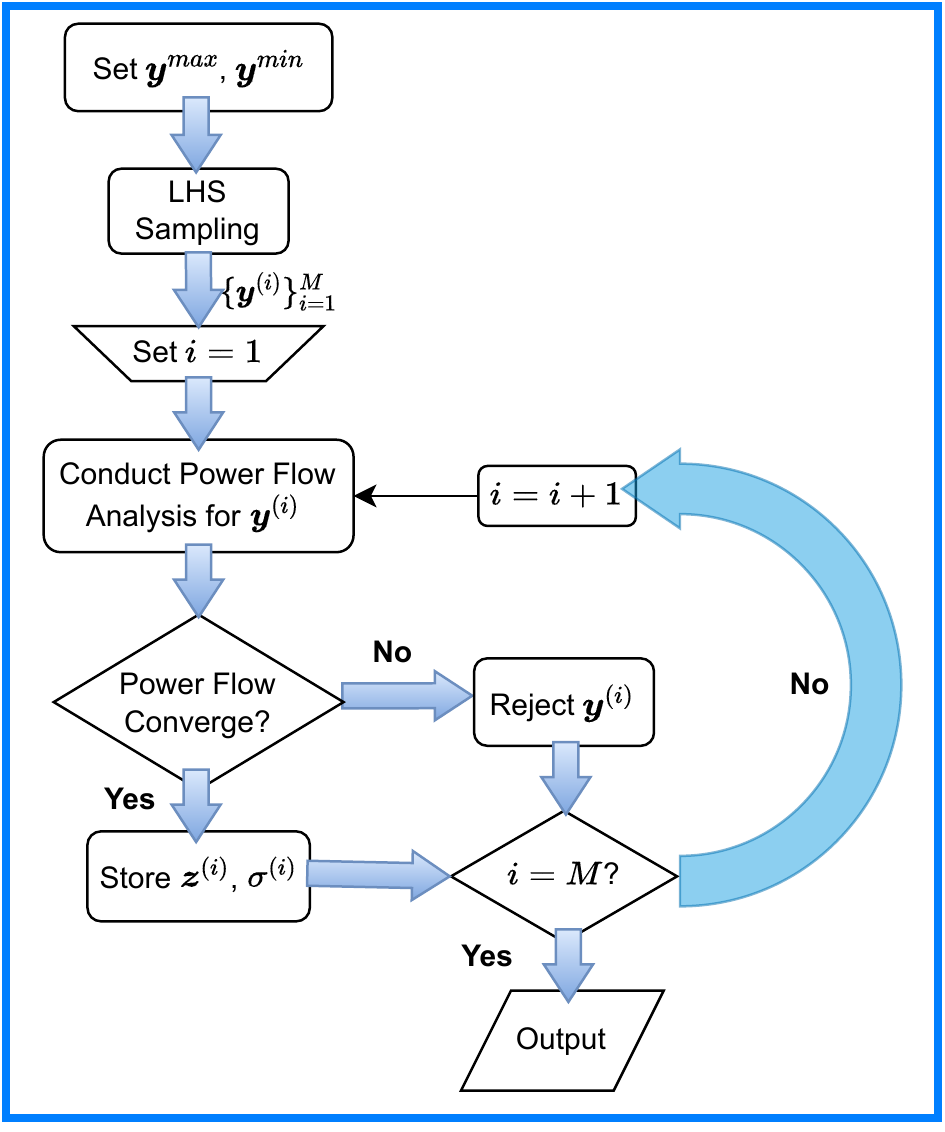}
\caption{\textcolor{black}{Flowchart for establishing NN-based model}}
\label{fig:9}
\vspace{-15pt}
\end{figure}
\textcolor{black}{\textit{Step 1:} LHS sampling. Use LHS to generate abundant samples $\{\boldsymbol{y}^{(i)}\}_{i=1}^{M}$. Here, each variable $y_i$ is assumed to follow a uniform distribution between its minimum and maximum allowable values, $y_i^{\min}$ and $y_i^{\max}$.} 

\textcolor{black}{\textit{Step 2:} Sample rejection. Note that not every sample in $\{\boldsymbol{y}^{(i)}\}_{i=1}^{M}$ is guaranteed to belong to the feasibility set of AC power flow equations \eqref{eq:4}. Hence, it is imperative to execute the power flow feasibility test for each $\boldsymbol{y}^{(i)}$ and discard any samples falling outside the feasible space.\footnote{\textcolor{black}{This necessitates solving the AC power flow problem by assigning $\boldsymbol{y}^{(i)}$ as the setpoint for control variables. If the power flow calculation fails to converge, then the associated sample $\boldsymbol{y}^{(i)}$ is rejected; otherwise, the corresponding values for $\boldsymbol{z}$ and $\sigma$ are retained.}} This process continues until all samples have been checked, ultimately yielding the collection of feasible samples $\{\boldsymbol{y}^{(i)},\boldsymbol{z}^{(i)},\sigma^{(i)}\}_{i=1}^{N_s}$.}

\textcolor{black}{\textit{Step 3:} NN training. Utilize $\{\boldsymbol{z}^{(i)},\sigma^{(i)}\}_{i=1}^{N_s}$ to train a BPNN model with $\boldsymbol{z}$ being predictors and $\sigma$ being the response. After training, the explicit expression for $\sigma$ is given by a combination of function composition and matrix multiplication:
\begin{equation}\label{eq:6}
        \sigma(\boldsymbol{z})\coloneqq f^{L}(\boldsymbol{W}^{L}\boldsymbol{f}^{L-1}(\boldsymbol{W}^{L-1}\cdots\boldsymbol{f}^{1}(\boldsymbol{W}^{1}\boldsymbol{z})\cdots)),
\end{equation}
where $L$ is the number of layers, $\boldsymbol{W}^{L}=(w_{jk}^l)$ are weights between layer $l-1$ and $l$, and $\boldsymbol{f}^l$ is the activation function at layer $l$.}

\textcolor{black}{At this stage, the deterministic VSC-OPF can be directly handled by common nonlinear optimization methods, such as stochastic gradient descent (SGD) and interior point method (IPM), following the substitution of $\sigma$ with \eqref{eq:6}; however, the inclusion of uncertain renewable generations introduces a challenging aspect of managing chance constraints. Moreover, the selection between SGD and IPM for solving VSC-OPF is important. Despite the computational advantages and lower memory requirements attributed to SGD, its convergence performance may prove unsatisfactory \cite{10.5555/2969442.2969497}. In contrast, IPM exhibits superior robustness and requires much fewer iterations for convergence, making it widely employed as the benchmark in the existing literature \cite{venzke2020inexact,liu2019acopf,mhanna2018adaptive}. Therefore, this paper opts to employ IPM for problem resolution. It should be noted that} the IPM requires the hessian of \eqref{eq:6}, which will be prohibitively time-consuming and memory-intensive for large-scale systems. These issues will be addressed in subsequent sections.
\section{Practical Uncertainty Propagation Approach}\label{approach}
The main focus of this section is to provide a novel approach known as APCE that effectively deals with chance constraints and transcends the limitations of existing PCE methods.   
\subsection{Motivation}
\textcolor{black}{Let the variable set $\boldsymbol{x}\coloneqq\{p_{g,i},q_{g, i},v_j,i_{ij},\sigma\}$ comprise all uncertainty-affected variables in \eqref{eq:5d}-\eqref{eq:5k}.} The implicit functional dependency between $\boldsymbol{x}$ and the uncertain renewable generations $\boldsymbol{\xi}$ \textcolor{black}{at a fixed operation point $\boldsymbol{y}^{fixed}$} can be defined in the form of:
\begin{equation}
    \textcolor{black}{\boldsymbol{x}=\boldsymbol{h}(\boldsymbol{y}^{fixed},\boldsymbol{\xi})}.
\end{equation}

It is mathematically complex or even impossible to analytically derive the probability distribution of $\boldsymbol{x}$. In practice, the truncated PCE technique can be adopted as a surrogate model to approximate $x_i$ as a finite sum of polynomials of $\boldsymbol{\xi}$ \cite{feinberg2018multivariate}:
\begin{equation}
    \hat{x}_i=\sum\limits_{j=0}^Ma_j\boldsymbol{\Psi}_j(\boldsymbol{\xi}),
\end{equation}
where $\{\boldsymbol{\Psi}_j(\boldsymbol{\xi})\}$ is the set of multivariate orthogonal polynomial basis and $\{a_j\}$ is the set of unknown coefficients. The crux of PCE is the identification of the basis for dependent random variables. Although Nataf transformation has been adopted to transform dependent variables into independent ones \cite{xu2021iterative}, it assumes a Gaussian copula as the underlying dependence structure, which might not align with the real case. 
\begin{figure}[!t]
\subfloat[]{\includegraphics[width=1.65in]{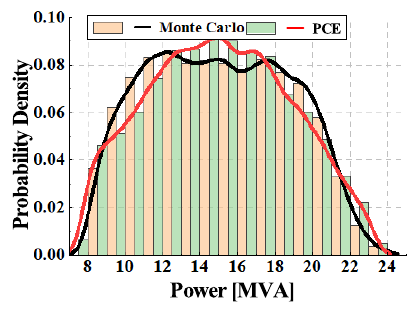}}
\quad
\subfloat[]{\includegraphics[width=1.65in]{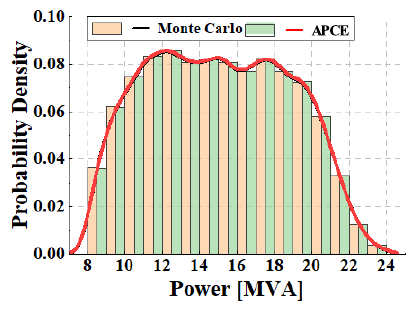}}
\caption{Comparison between MC simulations and: (a) PCE; (b) APCE}
\label{fig:1}
\vspace{-15pt}
\end{figure}
This claim is exemplified in Fig. \ref{fig:1}, which compares the line flow distributions calculated with the MC simulations, the Gaussian-copula-based PCE, and the APCE that will be covered in the succeeding section. It is obvious that the proposed APCE outperforms the other as it accurately captures the propagated line flow distribution. 
\subsection{Adaptive Polynomial Chaos Expansion}\label{APCE}
This part will present how to construct the orthonormal polynomial basis for dependent variables through the APCE method \textcolor{black}{\cite{lee2020practical}}.
It is important not to conflate the APCE with the generalized PCE in \cite{xiu2002wiener} and the arbitrary PCE in \cite{oladyshkin2012data} since the latter two approaches are both incapable of handling nonlinear dependence.
The salient feature of APCE lies in its capability to inherently accommodate the interdependence between different renewable generations and arbitrary marginal probability distributions under very mild assumptions. As a result, the APCE is highly flexible and adaptive, boasting decent accuracy in uncertainty propagation, as verified in Fig. \ref{fig:1}. Moreover, the proposed APCE can be implemented as entirely data-driven, eliminating the need for estimating either the marginal distribution or the joint probability distribution.
\subsubsection{Gram Matrix}
To start with, we first develop a set of monomial basis that can be subsequently combined into orthogonal polynomials.
Recall that given an $N$-dimensional multi-index $\boldsymbol{j}\coloneqq[j_1,\dots,j_N]\in\mathbb{N}^N$, a monomial of the random vector $\boldsymbol{\xi}$ is defined as the following product type:
\begin{equation}
    \boldsymbol{\xi}^{\boldsymbol{j}}=\xi_1^{j_1}\dots\xi_N^{j_N},
\end{equation}
which has a total degree $\vert\boldsymbol{j}\vert=j_1+\dots+j_N$.

Considering a fixed $u\in\mathbb{N}$, let $\mathcal{P}_u^N$ denote the multi-index set consisting of all $N$-dimensional multi-indices $\boldsymbol{j}\in\mathbb{N}^N$ such that $\vert\boldsymbol{j}\vert=u$. In other words, $\mathcal{P}_u^N$ is the set of all multi-indices with a fixed norm of $u$.
Then, it is elementary to show that the cardinality of $\mathcal{P}_u^N$ is $\binom{N+u-1}{u}$.
Subsequently, the cardinality of the multi-index set with the norm being at most $m$ can be calculated by summing up the cardinalities of the sets $\mathcal{P}_u^N$ for all $u\in\mathbb{N}$: 
\begin{equation}\label{eq:10}
    L_{m}=\sum\limits_{u=0}^m\binom{N+u-1}{u}=\binom{N+m}{m}.
\end{equation}
Denote by $\boldsymbol{j}^{(1)},\dots,\boldsymbol{j}^{(L_{m})}$ the sorted elements of this multi-index set using the graded reverse lexicographical ordering, then the corresponding monomials of $\boldsymbol{\xi}$ are defined by $\boldsymbol{\xi}^{\boldsymbol{j}^{(1)}},\dots,\boldsymbol{\xi}^{\boldsymbol{j}^{(L_{m})}}$, which also constitute the basis of the space of homogeneous polynomials with a degree at most $m$.

Gathering the above basis yields an \textcolor{black}{$L_{m}$}-dimensional monomial vector with the $i$-th element being $\boldsymbol{\xi}^{\boldsymbol{j}^{(i)}}$, i.e., $\boldsymbol{P}_m(\boldsymbol{\xi})=[\boldsymbol{\xi}^{\boldsymbol{j}^{(1)}},\dots,\boldsymbol{\xi}^{\boldsymbol{j}^{(L_{m})}}]^{\mathsf{T}}$. Thereafter, we further define the Gram matrix for monomials as follows.
\begin{definition}[Gram matrix]\label{def:1}
Given a random vector $\boldsymbol{\xi}$ and the corresponding monomial vector $\boldsymbol{P}_m(\boldsymbol{\xi})$, the Gram matrix is a matrix $\boldsymbol{G}_{m}$ whose $(p,q)$-th entry is the inner product of the $p$-th and $q$-th element of $\boldsymbol{P}_m(\boldsymbol{\xi})$:
\begin{equation}
    G_{m,pq}=\left\langle \boldsymbol{\xi}^{\boldsymbol{j}^{(p)}}, \boldsymbol{\xi}^{\boldsymbol{j}^{(q)}} \right\rangle=\int \boldsymbol{\xi}^{\boldsymbol{j}^{(p)}+\boldsymbol{j}^{(q)}}f_{\boldsymbol{\xi}}(\boldsymbol{\xi}) \, d\boldsymbol{\xi},
\end{equation}
where $f_{\boldsymbol{\xi}}(\boldsymbol{\xi})$ denotes the joint probability density function.
\end{definition}
In general, $f_{\boldsymbol{\xi}}(\boldsymbol{\xi})$ proves to be unavailable or challenging to estimate. To address this challenge, a data-driven method can be leveraged as an alternative:
\begin{equation}\label{eq:12}
    G_{m,pq}\approx\frac{1}{N_s}\sum\limits_{l=1}^{N_s}(\xi_1^{(l)})^{j_1^{(p)}+j_1^{(q)}}\dots(\xi_N^{(l)})^{j_N^{(p)}+j_N^{(q)}},
\end{equation}
where $\boldsymbol{\xi}^{(l)}=[\xi_1^{(l)},\dots,\xi_N^{(l)}]^{\mathsf{T}}$ denotes the $l$-th sample.
\subsubsection{Orthogonal Basis}
It is evident that $\boldsymbol{G}_m$ is positive-definite, thereby permitting the whitening transformation to generate orthogonal polynomials. 
In this paper, the Cholesky whitening matrix $\boldsymbol{W}_m$ is adopted which satisfies the condition $\boldsymbol{W}_m^{\mathsf{T}}\boldsymbol{W}_m=\boldsymbol{G}_m^{-1}$. This approach entails computing the Cholesky decomposition of $\boldsymbol{G}_m$ to obtain a lower-triangular matrix $\boldsymbol{U}_m$, which satisfies $\boldsymbol{G}_m=\boldsymbol{U}_m\boldsymbol{U}_m^\mathsf{T}$. The inverse of $\boldsymbol{U}_m$ then serves as a whitening matrix $\boldsymbol{W}_m$, which yields an $L_{m}$-dimensional polynomial vector:
\begin{equation}\label{eq:13}
    \boldsymbol{\Psi}_m(\boldsymbol{\xi})=\boldsymbol{W}_m\boldsymbol{P}_m(\boldsymbol{\xi})=\boldsymbol{U}_m^{-1}\boldsymbol{P}_m(\boldsymbol{\xi}).
\end{equation}
Next, we would like to state the appealing property of this new set of polynomials, as presented in Proposition \ref{prop:1}.
\begin{prop}[Orthonormality]\label{prop:1}
For any distinct $i,j\in\mathbb{N}\cap[1,L_{m}]$, the $i$-th and $j$-th element of the polynomial vector $\boldsymbol{\Psi}_m(\boldsymbol{\xi})$ are mutually orthonormal w.r.t. the probability measure of $\boldsymbol{\xi}$.
\end{prop}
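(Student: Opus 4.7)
The plan is a direct matrix-algebra verification that the covariance matrix of $\boldsymbol{\Psi}_m(\boldsymbol{\xi})$ equals the identity, which simultaneously captures both the orthogonality of distinct entries and the unit-norm of each entry (so I would read the proposition's ``orthonormal'' as including the diagonal case as well).

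First, I would recall from Definition \ref{def:1} that the Gram matrix of the monomial basis can be written compactly as the expectation $\boldsymbol{G}_m = \mathbb{E}\!\left[\boldsymbol{P}_m(\boldsymbol{\xi})\boldsymbol{P}_m(\boldsymbol{\xi})^{\mathsf{T}}\right]$, since the $(p,q)$-entry $\int \boldsymbol{\xi}^{\boldsymbol{j}^{(p)}+\boldsymbol{j}^{(q)}} f_{\boldsymbol{\xi}}(\boldsymbol{\xi})\, d\boldsymbol{\xi}$ is exactly $\mathbb{E}[\boldsymbol{\xi}^{\boldsymbol{j}^{(p)}}\boldsymbol{\xi}^{\boldsymbol{j}^{(q)}}]$. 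I would also note that $\boldsymbol{G}_m$ is symmetric positive-definite whenever the monomials are linearly independent in $L^2(f_{\boldsymbol{\xi}})$, which justifies the existence and invertibility of its Cholesky factor $\boldsymbol{U}_m$.

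Next, using \eqref{eq:13}, I would compute the Gram matrix of the transformed basis directly:
\begin{equation}
\mathbb{E}\!\left[\boldsymbol{\Psi}_m(\boldsymbol{\xi})\boldsymbol{\Psi}_m(\boldsymbol{\xi})^{\mathsf{T}}\right]
= \boldsymbol{U}_m^{-1}\,\mathbb{E}\!\left[\boldsymbol{P}_m(\boldsymbol{\xi})\boldsymbol{P}_m(\boldsymbol{\xi})^{\mathsf{T}}\right]\boldsymbol{U}_m^{-\mathsf{T}}
= \boldsymbol{U}_m^{-1}\boldsymbol{G}_m\boldsymbol{U}_m^{-\mathsf{T}}.
\end{equation}
Substituting the Cholesky identity $\boldsymbol{G}_m = \boldsymbol{U}_m\boldsymbol{U}_m^{\mathsf{T}}$ collapses the right-hand side to $\boldsymbol{U}_m^{-1}\boldsymbol{U}_m\boldsymbol{U}_m^{\mathsf{T}}\boldsymbol{U}_m^{-\mathsf{T}} = \boldsymbol{I}_{L_m}$. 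Reading the $(i,j)$-entry of this identity yields $\langle \Psi_{m,i},\Psi_{m,j}\rangle = \delta_{ij}$, which is precisely the orthonormality claim.

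Honestly, there is no hard step here; the proof is really a one-line consequence of the defining identity of the Cholesky whitening transform. The only point that deserves a brief remark is the implicit non-degeneracy assumption: the construction requires that the monomials $\{\boldsymbol{\xi}^{\boldsymbol{j}^{(i)}}\}_{i=1}^{L_m}$ be linearly independent in $L^2(f_{\boldsymbol{\xi}})$, so that $\boldsymbol{G}_m \succ 0$ and $\boldsymbol{U}_m^{-1}$ is well-defined. Under the mild assumption on $\boldsymbol{\xi}$ already invoked in Section \ref{APCE}, this holds, and the proposition follows. I would also mention that in the data-driven regime the exact Gram matrix is replaced by its empirical estimate \eqref{eq:12}, in which case the same argument shows that $\boldsymbol{\Psi}_m(\boldsymbol{\xi})$ is orthonormal with respect to the empirical measure on the samples $\{\boldsymbol{\xi}^{(l)}\}_{l=1}^{N_s}$, converging to the population orthonormality as $N_s \to \infty$.
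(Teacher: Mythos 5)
Your proof is correct and follows essentially the same route as the paper's: both show that $\mathbb{E}\bigl[\boldsymbol{\Psi}_m(\boldsymbol{\xi})\boldsymbol{\Psi}_m^{\mathsf{T}}(\boldsymbol{\xi})\bigr]=\boldsymbol{W}_m\boldsymbol{G}_m\boldsymbol{W}_m^{\mathsf{T}}=\boldsymbol{I}_m$ using the Cholesky whitening identity and then read off the $(i,j)$ entry as $\delta_{ij}$. Your added remarks on the linear-independence condition ensuring $\boldsymbol{G}_m\succ 0$ and on the empirical-measure version are reasonable elaborations of points the paper treats as evident, but they do not change the argument.
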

\begin{proof}
With a slight abuse of notation, let $\Psi_i(\boldsymbol{\xi})$ denote the $i$-th component of the vector $\boldsymbol{\Psi}_m(\boldsymbol{\xi})$. To prove the orthonormality between $\Psi_i(\boldsymbol{\xi})$ and $\Psi_j(\boldsymbol{\xi})$, it suffices to show $\left\langle \Psi_i(\boldsymbol{\xi}), \Psi_j(\boldsymbol{\xi}) \right\rangle_{\boldsymbol{\xi}}=\delta_{ij}$, where $\delta_{ij}$ is the Kronecker delta. 
From Definition \ref{def:1}, it can be seen that $\boldsymbol{G}_m=\mathbb{E}\left[\boldsymbol{P}_m(\boldsymbol{\xi})\boldsymbol{P}_m^{\mathsf{T}}(\boldsymbol{\xi})\right]$. Hence, the second-order moment of $\boldsymbol{\Psi}_m(\boldsymbol{\xi})$ can be expressed and simplified by
\begin{equation}\label{eq:14}
\begin{split}
    \mathbb{E}\left[\boldsymbol{\Psi}_m(\boldsymbol{\xi})\boldsymbol{\Psi}_m^{\mathsf{T}}(\boldsymbol{\xi})\right]
    & = \mathbb{E}\left[\boldsymbol{W}_m\boldsymbol{P}_m(\boldsymbol{\xi})\boldsymbol{P}_m^{\mathsf{T}}(\boldsymbol{\xi})\boldsymbol{W}_m^{\mathsf{T}}\right] \\
    & = \boldsymbol{W}_m\mathbb{E}\left[\boldsymbol{P}_m(\boldsymbol{\xi})\boldsymbol{P}_m^{\mathsf{T}}(\boldsymbol{\xi})\right]\boldsymbol{W}_m^{\mathsf{T}} \\
    & = \boldsymbol{W}_m\boldsymbol{G}_m\boldsymbol{W}_m^{\mathsf{T}} = \boldsymbol{I}_m,
    \end{split}
\end{equation}
where $\boldsymbol{I}_m$ is the $L_{m}$-order identity matrix. Therefore, the inner product between $\Psi_i(\boldsymbol{\xi})$ and $\Psi_j(\boldsymbol{\xi})$ is
\begin{equation}
\begin{split}
        \left\langle \Psi_i(\boldsymbol{\xi}), \Psi_j(\boldsymbol{\xi}) \right\rangle_{\boldsymbol{\xi}}& = \int \Psi_i(\boldsymbol{\xi})\Psi_j(\boldsymbol{\xi})f_{\boldsymbol{\xi}}(\boldsymbol{\xi}) \, d\boldsymbol{\xi}\\
        & = \mathbb{E}\left[\Psi_i(\boldsymbol{\xi})\Psi_j(\boldsymbol{\xi})\right]=\delta_{ij},
\end{split}
\end{equation}
which completes the proof.
\end{proof}\vspace{-8pt}
\begin{remark}
The whitening transformation is fundamentally distinct from the isoprobabilistic transformation, such as the Rosenblatt transformation. The latter approach aims to convert dependent random variables into independent ones, relying on full knowledge of the joint probability distribution. In contrast, the whitening transformation represents a fresh perspective that utilizes a linear combination of the monomial basis to generate an orthonormal polynomial basis w.r.t. the original dependent probability measure $f_{\boldsymbol{\xi}}(\boldsymbol{\xi})$.
\end{remark}
\subsubsection{APCE Approximation}
Note that the dimension of $\boldsymbol{\Psi}_m$ is consistent with the polynomial space of degree at most $m$, implying that this set of orthogonal polynomials functions as a basis. As $m$ approaches infinity, the vector $\boldsymbol{\Psi}_m$ encompasses an infinite collection of orthonormal bases and thus enables the Fourier polynomial expansion of the random vector $\boldsymbol{x}$, which is the APCE, as rigorously stated in Theorem \ref{the:1}.
\begin{theorem}\label{the:1}
Assume that the probability density function of $\boldsymbol{\xi}$ is compactly supported or is exponentially integrable. Given a set of multivariate orthonormal polynomials $\{\Psi_i(\boldsymbol{\xi}),1\leq i<\infty\}$ w.r.t. $f_{\boldsymbol{\xi}}(\boldsymbol{\xi})d\boldsymbol{\xi}$, then each finite variance computation model represented as a map \textcolor{black}{$x_i=h_i(\boldsymbol{y}^{fixed},\boldsymbol{\xi})$} can be expanded as a Fourier series composed of $\{\Psi_i(\boldsymbol{\xi})\}$:  
\begin{equation}\label{eq:16}
    \textcolor{black}{x_i=h_i(\boldsymbol{y}^{fixed},\boldsymbol{\xi})\sim \sum\limits_{j=1}^{\infty}a_{i,j}\Psi_j(\boldsymbol{\xi})},
\end{equation}
where the symbol $\sim$ means the APCE of $x_i$ converges in mean-square, both in probability and distribution.
The coefficients $a_{i,j},i=1,\dots,N,j=1,\dots,\infty$ are defined as
\begin{equation}
    \textcolor{black}{a_{i,j}=\int h_i(\boldsymbol{y}^{fixed},\boldsymbol{\xi})\Psi_j(\boldsymbol{\xi})f_{\boldsymbol{\xi}}(\boldsymbol{\xi}) \, d\boldsymbol{\xi}}.
\end{equation}
\end{theorem}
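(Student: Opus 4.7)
The plan is to recognize the claim as an instance of the completeness of orthonormal polynomials in $L^2(\mathbb{R}^N, f_{\boldsymbol{\xi}} d\boldsymbol{\xi})$ and then apply standard Hilbert space and probabilistic convergence machinery. First, I would view the collection $\{\Psi_j(\boldsymbol{\xi})\}_{j=1}^{\infty}$, obtained by letting the degree $m$ in \eqref{eq:13} grow, as sitting inside the Hilbert space $\mathcal{H}\coloneqq L^2(\mathbb{R}^N, f_{\boldsymbol{\xi}} d\boldsymbol{\xi})$ endowed with the inner product $\langle f,g\rangle_{\boldsymbol{\xi}}=\int fg f_{\boldsymbol{\xi}}d\boldsymbol{\xi}$. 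The finite-variance hypothesis on $h_i$ is exactly the statement that $h_i(\boldsymbol{y}^{fixed},\cdot)\in\mathcal{H}$.

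The crux, and the main obstacle, is establishing that $\{\Psi_j\}_{j=1}^{\infty}$ forms a complete orthonormal system in $\mathcal{H}$. Orthonormality is already handed to us by Proposition~\ref{prop:1}, so I only need denseness of the polynomial span in $\mathcal{H}$; this is where the two assumptions on $f_{\boldsymbol{\xi}}$ enter. In the compactly supported case I would invoke the Stone-Weierstrass theorem: on the compact support, continuous functions are uniformly approximable by polynomials, and continuous compactly supported functions are dense in $\mathcal{H}$. In the exponentially integrable case, the density $f_{\boldsymbol{\xi}}$ admits finite moments of all orders and a moment-generating function on a neighborhood of the origin, so the multivariate moment problem is determinate (via Carleman-type criteria); it is a classical consequence that polynomials are then dense in $\mathcal{H}$. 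Since $\{\Psi_j\}$ is obtained from the monomial basis $\boldsymbol{P}_m(\boldsymbol{\xi})$ by the invertible Cholesky whitening $\boldsymbol{W}_m$, it spans exactly the same linear subspace, so its span is also dense.

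Once completeness is in hand, the remaining steps are routine. Setting $a_{i,j}=\langle h_i(\boldsymbol{y}^{fixed},\cdot),\Psi_j\rangle_{\boldsymbol{\xi}}$ recovers the stated coefficients, and the partial sums $S_M(\boldsymbol{\xi})=\sum_{j=1}^{M}a_{i,j}\Psi_j(\boldsymbol{\xi})$ are the orthogonal projections of $h_i$ onto the $M$-dimensional polynomial subspace. Bessel's inequality combined with completeness yields the Parseval identity $\sum_{j=1}^{\infty}a_{i,j}^2=\|h_i\|_{\mathcal{H}}^2$, whence $\mathbb{E}[(h_i-S_M)^2]=\|h_i\|_{\mathcal{H}}^2-\sum_{j=1}^{M}a_{i,j}^2\to 0$ as $M\to\infty$, which is mean-square convergence.

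Finally, I would promote the mean-square convergence to the two weaker modes claimed. Convergence in probability follows at once from Markov's inequality: for any $\varepsilon>0$, $\mathbb{P}(|h_i-S_M|>\varepsilon)\leq \varepsilon^{-2}\mathbb{E}[(h_i-S_M)^2]\to 0$. Convergence in distribution is then standard, since convergence in probability implies convergence of the induced laws in the weak topology. This chains together into exactly the convergence statement encoded by the symbol $\sim$ in \eqref{eq:16}, completing the proof outline.
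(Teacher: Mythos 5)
The paper does not actually prove Theorem~\ref{the:1} at all --- its ``proof'' is a one-line deferral to \cite{rahman2018polynomial} --- so your outline should be judged against the standard argument that reference carries out, and in structure it matches it: reduce the claim to completeness of the polynomial system in $L^2(f_{\boldsymbol{\xi}}\,d\boldsymbol{\xi})$, identify $a_{i,j}$ as Fourier coefficients of the projection, get mean-square convergence from Parseval, promote to convergence in probability by Chebyshev/Markov, and to convergence in distribution from there. The Hilbert-space bookkeeping, the use of Proposition~\ref{prop:1} for orthonormality, the observation that the invertible Cholesky whitening preserves the span of the monomials, and the two promotion steps at the end are all correct and are exactly where the weight of the theorem lies.

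There is, however, one genuine gap in the exponentially integrable branch of your completeness argument. You argue ``exponential integrability $\Rightarrow$ Carleman-type criterion $\Rightarrow$ moment problem determinate $\Rightarrow$ polynomials dense in $L^2$.'' The last implication is a theorem only in one variable; in several variables determinacy of the moment problem does \emph{not} imply $L^2$-density of the polynomials (Berg and Thill exhibited rotation-invariant determinate measures on $\mathbb{R}^d$, $d\geq 2$, for which the polynomials fail to be dense), so the chain does not close in precisely the multivariate setting the theorem is about. The repair is to use exponential integrability directly rather than routing through determinacy: if $g\in L^2(f_{\boldsymbol{\xi}}\,d\boldsymbol{\xi})$ is orthogonal to every monomial, then $F(\boldsymbol{z})=\int e^{\langle \boldsymbol{z},\boldsymbol{\xi}\rangle}g(\boldsymbol{\xi})f_{\boldsymbol{\xi}}(\boldsymbol{\xi})\,d\boldsymbol{\xi}$ is, by Cauchy--Schwarz and the exponential moment bound, holomorphic on a tube $\{\lVert\operatorname{Re}\boldsymbol{z}\rVert<c/2\}$ in $\mathbb{C}^N$ with all derivatives vanishing at the origin, hence $F\equiv 0$ there; restricting to purely imaginary $\boldsymbol{z}$ shows the Fourier transform of the finite measure $g\,f_{\boldsymbol{\xi}}\,d\boldsymbol{\xi}$ vanishes, so $g=0$ almost everywhere. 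With that substitution (the compactly supported branch via Stone--Weierstrass is fine as written), your proof is complete and consistent with the cited source.
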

\begin{proof}
    Detailed proof can be found in \cite{rahman2018polynomial}.
\end{proof}\vspace{-5pt}
\textcolor{black}{In power systems, the function $h_i(\boldsymbol{y}^{fixed},\boldsymbol{\xi})$ is a non-polynomial mapping, which means it can only be exactly represented by a linear combination of an infinite polynomial basis, which is impractical for engineering applications. Therefore}, \eqref{eq:16} must be truncated into a finite sum via the total-degree truncation scheme. In this work, an $m$-th order APCE approximation comprises all polynomials of total degree less than or equal to $m$:
\begin{equation}\label{eq:18}
    \hat{x}_i=\sum\limits_{j=1}^{L_{m}}a_{i,j}\Psi_j(\boldsymbol{\xi}).
\end{equation}
The expansion coefficients can be calculated using a non-intrusive method based on ordinary least-squares regression, which is similar to other PCE methods and hence omitted here. Based on this established expansion, the empirical distribution of $x_i$ can thus be accessed inexpensively by feeding the renewable generation dataset $\{\boldsymbol{\xi}^{(i)}\}_{i=1}^{N_s}$ into \eqref{eq:18}.

\textcolor{black}{Currently, there is no golden rule for the selection of $m$ in \eqref{eq:18}. However, previous studies \cite{xu2021iterative,muhlpfordt2019chance} have indicated that a second-order traditional PCE model will be sufficiently accurate for characterizing power system variables. The setting of $m=2$ in the proposed APCE method will also be corroborated by test results in Section V.}
\subsection{Dimensionally Decomposed APCE}\label{DD-APCE}
It should be acknowledged that the APCE, like most existing PCE methods, will succumb to the curse of dimensionality, thereby limiting its application to high-dimensional cases. Motivated by the sparsity-of-effects principle, which suggests that a lower-variate approximation suffices to capture the uncertainty of the outputs, we further adopt a novel DD-APCE method \cite{lee2023high} to restructure the basis functions of the APCE in a dimensionwise manner.

The heart of the DD-APCE lies in limiting the interaction order among input variables. Following this intuition, for each $m\in \mathbb{N}$ and $0\leq s\leq N$, we first introduce the reduced multi-index set, denoted by
\begin{equation}\label{eq:19}
    \begin{split}
        \Pi_{s,m}^{N}\coloneqq\{&\boldsymbol{j}=(\boldsymbol{j}_l, \mathbf{0}_{-l})\in\mathbb{N}^N:\boldsymbol{j}_l\in\mathbb{N}^{\vert l\vert}, \\
        &\vert l\vert\leq\vert \boldsymbol{j}_l\vert\leq m, 0\leq\vert l\vert\leq s\},
    \end{split}
\end{equation}
where $l\subseteq\{1,\dots,N\}$ is a subset and $-l\coloneqq\{1,\dots,N\}\backslash l$ is the corresponding complementary subset. Here, $\boldsymbol{j}=(\boldsymbol{j}_l,\mathbf{0}_{-l})$ denotes an $N$-dimensional multi-index whose $i$-th component is $j_i$ if $i\in l$ and 0 otherwise. Hence, $\boldsymbol{j}$ has $\vert l\vert$ nonzero elements. Since $l$ has at most $s$ elements, $\Pi_{s,m}^N$ contains all multi-indices with at most $s$ nonzero elements and with the norm being at most $m$. The cardinality of $\Pi_{s,m}^N$ is given by
\begin{equation}\label{eq:20}
    L_{s,m}\coloneqq 1+\sum\limits_{k=1}^s\binom{N}{k}\binom{m}{k}.
\end{equation}

With the graded reverse lexicographical ordering, the elements of $\Pi_{s,m}^N$ are arranged as $\boldsymbol{j}^{(1)},\dots,\boldsymbol{j}^{(L_{s,m})}$ and the corresponding monomials of $\boldsymbol{\xi}$ are denoted by $\boldsymbol{\xi}^{\boldsymbol{j}^{(1)}},\dots,\boldsymbol{\xi}^{\boldsymbol{j}^{(L_{s,m})}}$. Employing the same procedures as described in Section \ref{APCE}, we can calculate the Gram matrix $\boldsymbol{G}_{s,m}$ and apply a whitening transformation to this new set of monomials to generate a set of measure-consistent orthonormal polynomials $\boldsymbol{\Psi}_{s,m}(\boldsymbol{\xi})$. This, in turn, leads to an $s$-variate, $m$-th order DD-APCE approximation of the form:
\begin{equation}
    \hat{x}_i=\sum\limits_{j=1}^{L_{s,m}}a_{i,j}\Psi_j(\boldsymbol{\xi}).
\end{equation}

Compared to the regular APCE, the DD-APCE, with $L_{s,m}\leq L_{m}$, offers superior efficiency while maintaining acceptable accuracy. In particular, this advantage is expected to be more pronounced in high-dimensional cases. To illustrate, the number of basis functions in a second-order APCE approximation scales quadratically as $(N+2)(N+1)/2$, whereas that of the uni-variate DD-APCE grows only linearly as $2N+1$. 
Therefore, the DD-APCE is capable of generating measure-consistent orthonormal polynomials in a stable manner while exhibiting exceptional efficacy in high dimensions. These attributes render the DD-APCE a promising candidate for practical uncertainty quantification in power systems.
\section{The Integrated Data-driven Scheme}\label{scheme}
In the previous section, we discussed the characterization of model responses under complexly correlated uncertainties. Our current goal is to amalgamate all the aforementioned techniques into a data-driven scheme tailored to tackle CC-VSC-OPF problems. However, prior to accomplishing this objective, it is imperative to address the computational burden associated with \eqref{eq:6}, thereby enhancing the feasibility and practicality of the proposed approach.
\subsection{PLS-NN Framework}
Recall that the IPM necessitates the Hessian of \eqref{eq:6} when solving an ordinary VSC-OPF problem. Nevertheless, when dealing with a large-scale NN, deriving the analytical expression for the Hessian matrix can be a laborious and complex task. Furthermore, even the numerical evaluation of the Hessian matrix becomes impractical when the dimension $2N-1$ exceeds a certain threshold.

One workaround to overcome this deficiency is to compute the gradient and Hessian of a reduced set of latent variables $\boldsymbol{\upsilon}$ and then retrieve the derivatives w.r.t. the original predictor variables $\boldsymbol{z}$. To facilitate this inverse transform, linear dimension reduction techniques are preferred, which render the following generic formula:
\begin{equation}\label{eq:21}
    \boldsymbol{\upsilon}=\boldsymbol{g}(\boldsymbol{z})=\boldsymbol{z}\boldsymbol{P},
\end{equation}
where the dimensions of $\boldsymbol{\upsilon}, \boldsymbol{z}, \boldsymbol{P}$ are $1\times M, 1\times (2N-1)$ and $(2N-1)\times M$, respectively. The NN surrogate model is then trained with $\boldsymbol{\upsilon}$ as input instead of $\boldsymbol{z}$, and $\sigma$ as output, yielding a more parsimonious model $\hat{\sigma}(\boldsymbol{\upsilon})$. During each iteration of the IPM procedure, the evaluation of $\sigma(\boldsymbol{z})$, its gradient, and Hessian for a given value of $\boldsymbol{z}^{(i)}$ can be derived as:
\begin{subequations}\label{eq:22}
    \begin{alignat}{1}
            \sigma(\boldsymbol{z}^{(i)}) &=\hat{\sigma}(\boldsymbol{\upsilon}^{(i)})=\hat{\sigma}(\boldsymbol{z}^{(i)}\boldsymbol{P}),\\
        \nabla \sigma\mid_{\boldsymbol{z}^{(i)}}&=\nabla (\hat{\sigma}\circ \boldsymbol{g})\mid_{\boldsymbol{z}^{(i)}} \notag\\
        &=\boldsymbol{P}\nabla \hat{\sigma}\mid_{\boldsymbol{\upsilon}^{(i)}},\\
        \nabla^2\sigma\mid_{\boldsymbol{z}^{(i)}}&=\nabla^2 (\hat{\sigma}\circ \boldsymbol{g})\mid_{\boldsymbol{z}^{(i)}} \notag\\
        &=\boldsymbol{P}\nabla^2 \hat{\sigma}\mid_{\boldsymbol{\upsilon}^{(i)}} \boldsymbol{P}^{\mathsf{T}}.
    \end{alignat}
    \vspace{-6pt}
\end{subequations}
\begin{figure}[!t]
\removelatexerror
\begin{algorithm}[H]
    \caption{The PLS-NN Framework}
    Generate abundant samples $\{\boldsymbol{y}^{(i)}\}_{i=1}^{M}$ using LHS\;
    Obtain feasible samples $\{\boldsymbol{y}^{(i)},\boldsymbol{z}^{(i)},\sigma^{(i)}\}_{i=1}^{N_s}$ through power flow analysis \;
    \eIf{\rm the value of $2N-1$ is reasonable}
    {Train the NN-based surrogate model $\sigma(\boldsymbol{z})$ as shown in \eqref{eq:6}\;}
    {Choose the dimension of latent variables $\boldsymbol{\upsilon}$ \;
    Perform PLS on voltage stability dataset $D$ and obtain the transformation matrix $\boldsymbol{P}$\;
    Construct latent variables through \eqref{eq:21} and create dataset $\{(\boldsymbol{\upsilon},\sigma)\}$\;
    Train the NN-based surrogate model, which yields the reduced model $\hat{\sigma}(\boldsymbol{\upsilon})$.}
        \end{algorithm}
        \vspace{-10pt}
\end{figure}

As indicated by \eqref{eq:22}, the quality of the retrieved values, $\sigma(\boldsymbol{z})$, $\nabla \sigma(\boldsymbol{z})$ and $\nabla^2 \sigma(\boldsymbol{z})$, is contingent upon the prediction accuracy of the reduced model $\hat{\sigma}(\boldsymbol{\upsilon})$. To this end, PLS is proposed in this study to construct new predictor variables $\boldsymbol{\upsilon}$ \cite{de1993simpls}. As a supervised technique, PLS maximizes the predictive power of $\boldsymbol{\upsilon}$ by accounting for the variability in the response. By capturing the variables that are highly informative regarding the response, PLS serves as a more effective alternative to unsupervised methods like principal component analysis and is expected to attain a parsimonious surrogate model $\hat{\sigma}(\boldsymbol{\upsilon})$ with satisfactory accuracy. 
A detailed procedure for implementing the proposed PLS-NN framework is presented in Algorithm 1. Within this framework, the computational overheads related to VSC-OPF problems are significantly reduced since it only requires a $M \times M$ Hessian as opposed to the original $(2N-1)\times (2N-1)$ one, where the $M \times M$ Hessian can be calculated inexpensively via several differentiation techniques, such as central difference differentiation. Meanwhile, this workaround not only enhances the scalability of the proposed approach but also serves to mitigate the limitations of the NN model training under high-dimensional inputs.
\begin{figure}[!t]
    \removelatexerror
     \begin{algorithm}[H]
             \KwIn{$\{\boldsymbol{\xi}^{(i)}\}_{i=1}^{N_s}$ - $N$-dimensional renewable generation dataset}
        \KwOut{$\boldsymbol{y}$ - Final solution of CC-VSC-OPF}
        \caption{The Iterative Data-Driven Scheme}
        Fix renewable generations at their expected values and then obtain the surrogate model of the VSI, $\sigma(\boldsymbol{z})\backslash\hat{\sigma}(\boldsymbol{\upsilon})$, according to Algorithm 1\;
        Build the deterministic VSC-OPF model\;
        Specify parameters of DD-APCE: total degree $m$ and maximum interaction order $s$\;
        Create $L_{s,m}$-dimensional monomial basis $\boldsymbol{P}_{s,m}(\boldsymbol{\xi })$\;
        Construct approximated $L_{s,m}\times L_{s,m}$ Gram matrix $\boldsymbol{G}_{s,m}$ via \eqref{eq:12}\;
        Generate orthonormal polynomial basis $\boldsymbol{\Psi}_{s,m}(\boldsymbol{\upsilon})$ through whitening transformation\;
               Set $k=0$, and initialize $\boldsymbol{x}_{\min}^{(0)}$ and $\boldsymbol{x}_{\max}^{(0)}$\; 
        Solve the deterministic VSC-OPF and obtain the initial operation point, $\boldsymbol{y}^{(0)}$ \;
        \While{$k\leq k_{\max}$}
        {
                                          Calculate coefficients of the $s$-variate, $m$-th order DD-APCE metamodel\;
                Approximate the distributions of $\boldsymbol{x}$ through stochastic testing\;
                Calculate margins $\Delta\boldsymbol{x}_{\min/\max}^{(k)}$ via \eqref{eq:26}\;
                \eIf{$\Delta\boldsymbol{x}_{\min}^{(k)}=\mathbf{0}$ \& $\Delta\boldsymbol{x}_{\max}^{(k)}=\mathbf{0}$}
                {
                \textbf{break}\;
                }{
                Update upper and lower bounds via \eqref{eq:27}\;
                Update iteration count $k\gets k+1$\;
                Solve the deterministic VSC-OPF to update the operation point $\boldsymbol{y}^{(k)}$
                }
                           
        }
     \end{algorithm}
     \vspace{-10pt}
\end{figure}
\subsection{The Iterative Scheme}
So far, we have developed the PLS-NN framework to solve large-scale VSC-OPF problems and have devised the DD-APCE technique to evaluate chance constraints at a fixed operation point. Our next step involves employing an iterative scheme inspired by \cite{roald2018chance} to merge these techniques, which enables the efficient solution of CC-VSC-OPF problems. The whole procedure is presented in Algorithm 2.

Prior to commencing the iteration process, the implicit VSI is substituted with the NN-based surrogate model. Subsequently, the deterministic VSC-OPF problem is formulated by assigning the uncertain renewable generations their anticipated values. At each iteration $k$, the deterministic VSC-OPF problem is solved utilizing the IPM, where the upper and lower bounds of $\boldsymbol{x}$ are set as $\boldsymbol{x}_{\min}^{(k)}$ and $\boldsymbol{x}_{\max}^{(k)}$, respectively. Once the operation point $\boldsymbol{y}^{(k)}$ is determined, the expansion coefficients in \eqref{eq:18} are calculated, enabling the estimation of empirical distributions of $\boldsymbol{x}$. 
Then, the tightening margins $\Delta\boldsymbol{x}_{\min}^{(k)}$ and $\Delta\boldsymbol{x}_{\max}^{(k)}$, which are introduced to ensure the prescribed violation probabilities, are calculated in a heuristic way:  
\begin{subequations}\label{eq:26}
    \begin{align}
        \Delta x_{i,\min}^{(k)}&=\left[x_{i,\min}^{(0)}-Q_{\epsilon_i}^{(k)}\left(x_i\right)\right]^+,\\
        \Delta x_{i,\max}^{(k)}&=\left[Q_{1-\epsilon_i}^{(k)}\left(x_i\right)-x_{i,\max}^{(0)}\right]^+,
    \end{align}
\end{subequations}
where $Q_{\alpha}^{(k)}(x_i)$ is the $\alpha$ quantile of the estimated distribution of $x_i$ at iteration $k$.
At the end of each iteration, the upper and lower bounds are updated via
\begin{subequations}\label{eq:27}
    \begin{align}
        \boldsymbol{x}_{\min}^{(k+1)}&=\boldsymbol{x}_{\min}^{(k)}+\Delta\boldsymbol{x}_{\min}^{(k)},\\
        \boldsymbol{x}_{\max}^{(k+1)}&=\boldsymbol{x}_{\max}^{(k)}-\Delta\boldsymbol{x}_{\max}^{(k)}.
    \end{align}
\end{subequations}

Following the above transformation, the CC-VSC-OPF problem is reduced to the traditional AC OPF with an explicit voltage stability constraint, which is amenable to existing optimization solvers.
Moreover, it can be seen that this algorithm is entirely data-driven and distribution-free, rendering it highly suitable for practical applications.
\textcolor{black}{
\begin{remark}
    It should be noted that the proposed Algorithm 2 cannot ensure global optimality under the problem setting. The nonconvex nature of the CC-VSC-OPF problem renders the optimization task at least NP-hard and there is no guarantee to obtain a global minimizer. However, as Roald and Anderson have tested in \cite{roald2018chance}, the iterative scheme can provide solutions close to a local optimum. This work aims to devise a high-quality and efficient solution framework for the CC-VSC-OPF problem, and global optimality is beyond the main focus.
\end{remark}
\begin{remark}
    Regarding the convergence property of the proposed iterative scheme, to the best of the authors' knowledge, the only research that rigorously conducts theory analysis is \cite{brust2022con}, wherein the researchers make the first attempt to prove a sufficient condition for convergence. Besides, a similar iterative scheme has been successfully implemented in \cite{roald2018chance,xu2021iterative}, and it also demonstrates satisfactory convergence performance in all test cases in Section V. Therefore, this iterative scheme is expected to perform well in practical applications.
\end{remark}
}
\section{Simulations and Results}\label{simulation}
In this section, various case studies are conducted on the IEEE 14-bus, IEEE 30-bus, European 89-bus, and Illinois 200-bus systems whose data can be obtained from MATPOWER \cite{zimmer2011matpower}. Solar power plants are added to the systems to introduce uncertainties in the VSC-OPF problem. The hourly power output of solar stations is simulated through the web platform, Renewables.ninja \cite{pfenninger2016long}. 
All the simulations are implemented on a desktop with 3.80-GHz AMD Ryzen 7 5800X processors and 16 GB RAM. The program environment is MATLAB 2021b. The VSC-OPF problem is implemented with the \textit{callback} function in MATPOWER and is resolved with Ipopt.
\vspace{-10pt}
\subsection{Small test cases}
This case study is performed on the IEEE 14-bus and IEEE 30-bus systems, with the corresponding voltage stability thresholds being 0.56 and 0.24. Three 60 MW solar power plants and five 30 MW solar power plants are added to the above two systems, respectively. A total of 11680 \textcolor{black}{historical active power output samples} for each power plant are collected by selecting corresponding stations in the Zhoushan islands in China from 2015 to 2018. 
\subsubsection{Accuracy of NN-based Surrogate Model}
We first investigate the suitability of employing a \textcolor{black}{NN-based} model as a surrogate for the VSI. PLS is unnecessary since the scale of both two cases is manageable. \textcolor{black}{A set of 10,000 $\{\boldsymbol{z},\sigma\}$ samples is generated following the procedures outlined in Section II.B. This dataset is subsequently divided into a training set consisting of 7,000 samples and a test set comprising 3,000 samples. Then the approximate analytical expression $\sigma(\boldsymbol{z})$ is obtained by training the NN with these simulated samples. To further assess the accuracy and the generalization capability of the NN model, additional 10000 $\{\boldsymbol{z},\sigma\}$ samples are randomly generated based on the same procedure and are appended to the test set.} 

\textcolor{black}{By feeding both training and test sets into the obtained NN model, we extract predicted VSI values and quantify the associated prediction errors. The accuracy test results are displayed in Fig. \ref{fig:2}. The first row depicts histograms of prediction errors in the two test systems. In the 14-bus system, the mean absolute errors for the training and testing sets are $3.18\times10^{-6}$ and $3.29\times10^{-6}$, respectively. Likewise, the corresponding errors are $1.09\times10^{-6}$ and $1.12\times10^{-6}$ in the 30-bus system. The second row compares the actual and predicted values of the VSI. The blue points represent $(\sigma,\sigma^{'})$ pairs, where $\sigma$ and $\sigma^{'}$ are the actual and predicted values of the VSI, respectively. Obviously, most pairs are very close to the red line, which represents the position where predictions equal actual values. These observations indicate that the NN model can serve as an accurate surrogate model for the VSI.} 
\begin{figure}[!t]
\centering
\subfloat[Error histogram]{\includegraphics[width=1.74in]{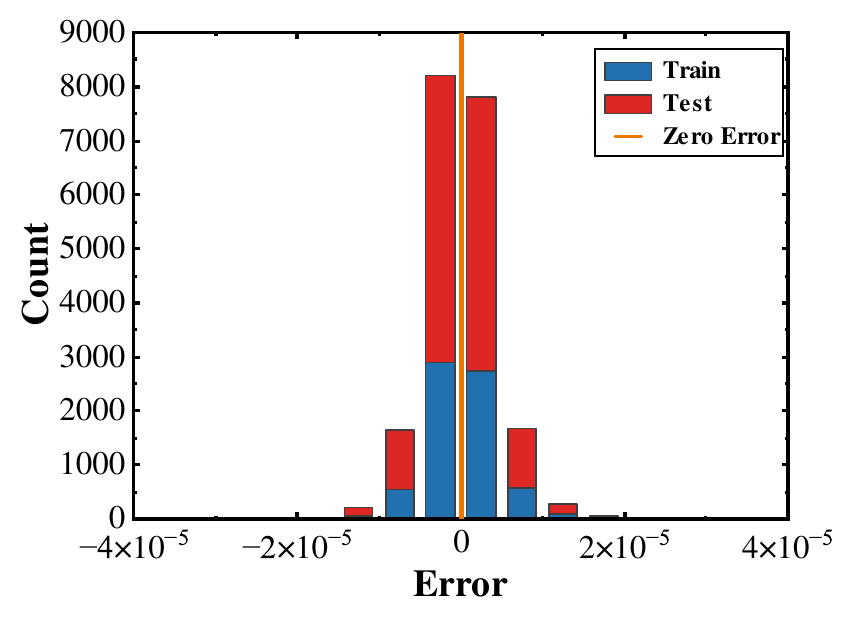}}
\subfloat[Error histogram]{\includegraphics[width=1.74in]{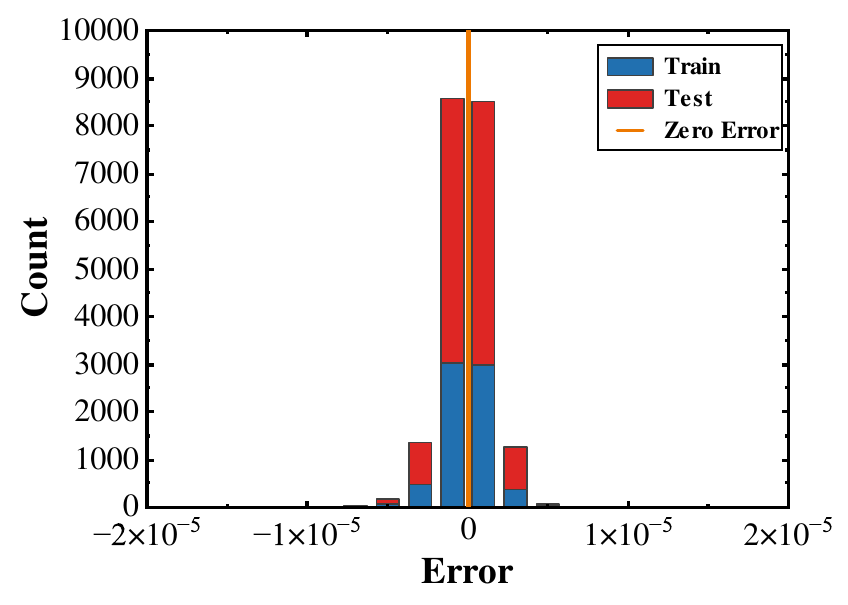}}\\[-1ex]
\subfloat[\textcolor{black}{Prediction vs. actual value}]{\includegraphics[width=1.74in]{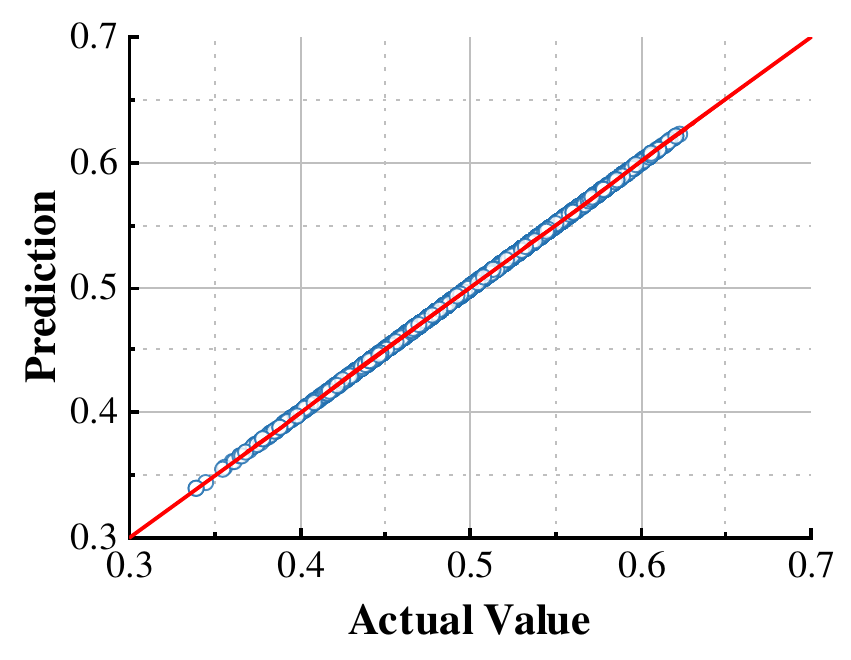}}
\subfloat[\textcolor{black}{Prediction vs. actual value}]{\includegraphics[width=1.74in]{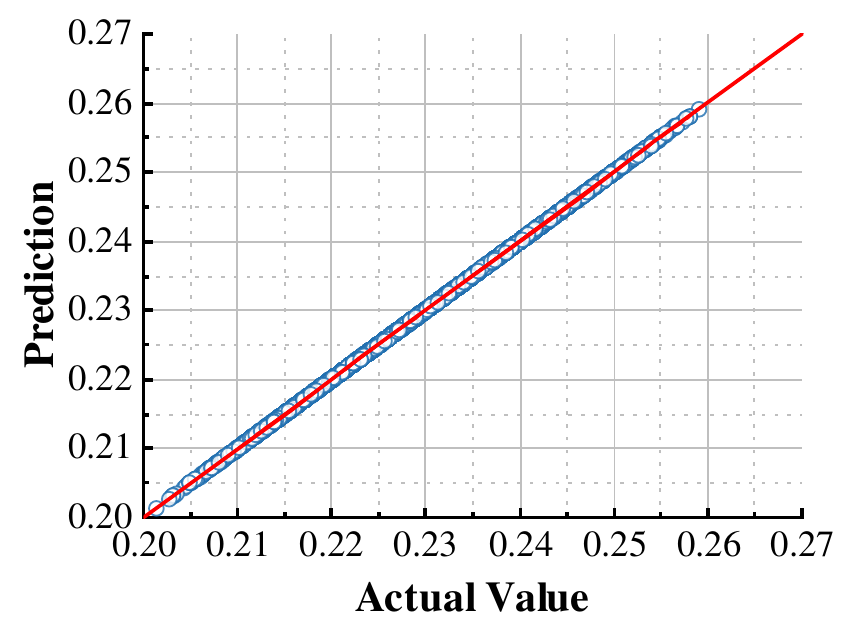}}
\caption{Accuracy of NN-based surrogate model: left column for IEEE 14-bus system, and right column for IEEE 30-bus system}
\label{fig:2}
\vspace{-20pt}
\end{figure}
\subsubsection{Accuracy of Uncertainty Quantification}
Now, we provide empirical validation of the accuracy of the APCE in terms of uncertainty quantification, utilizing MC simulations with 11680 samples as the benchmark. In this case study, a second-order APCE is adopted to represent system responses. Specifically, we select the VSI and the reactive power generation as the quantities of interest (QoIs) for illustration purposes. For each system, we calculate the simulated probability distributions of QoIs under the same operation condition, employing kernel smoothing methods to derive distribution curves. Visual comparisons between the APCE and MC are presented in Fig. \ref{fig:3}. Consistent with the findings in other relevant works \cite{xu2021iterative,muhlpfordt2019chance}, the APCE metamodel of order 2 is sufficient to capture the uncertainty of state variables in power systems. Furthermore, the results suggest that the second-order APCE can effectively model the VSI, a more intricate quantity. In conclusion, the data-driven APCE can represent random system responses accurately without any presumptions or inferences regarding individual distributions and underlying dependence structure.
\begin{figure}[!t]
\centering
\subfloat[]{\includegraphics[width=1.74in]{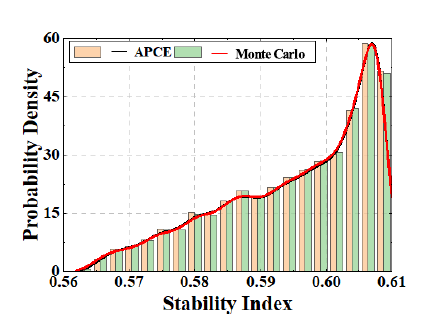}}
\subfloat[]{\includegraphics[width=1.74in]{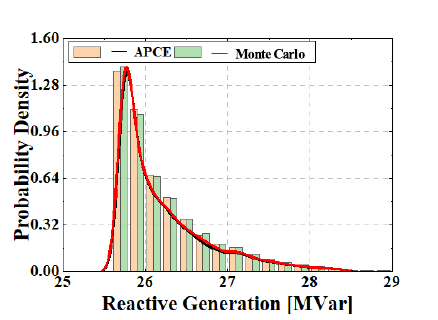}}\\[-1ex]
\subfloat[]{\includegraphics[width=1.74in]{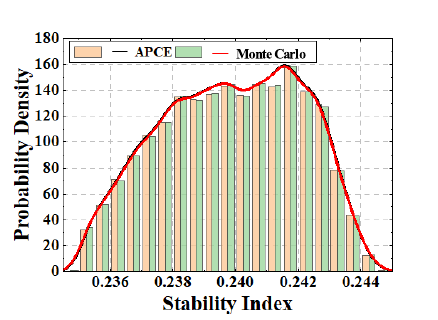}}
\subfloat[]{\includegraphics[width=1.74in]{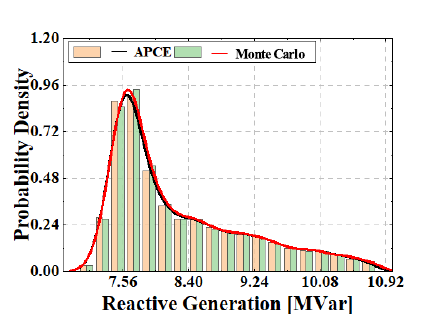}}
\caption{Comparison between MC simulations and APCE metamodel: top row for IEEE 14-bus system, and bottom row for IEEE 30-bus system}
\label{fig:3}
\vspace{-10pt}
\end{figure}
\subsubsection{Satisfaction of Chance Constraints}
Next, the following two CC-VSC-OPF methods are considered to illustrate the effectiveness of the proposed approach: 
(\textbf{M1}) The proposed \textbf{Algorithm 2};
(\textbf{M2}) The identical iterative scheme except that the chance constraints are evaluated via MC simulations in each iteration. \textcolor{black}{MC simulation provides the most accurate characterization of probability distributions, and it has been widely adopted as the benchmark in many research fields to test the performance of statistical-based surrogate models \cite{moustapha2022active,klink2022efficient,peng2021multi}. In this paper, M2 is used to determine the proximity of the results of M1 to the benchmark.}
The performances of \textbf{M1} and \textbf{M2} are tested in terms of the total cost, computational efficiency, and the maximum empirical constraint violation probability $\epsilon_J$ under different settings. The results are summarized in Table \ref{tab:1}. 
As anticipated, decreasing predetermined violation probabilities will entail an increase in the cost to ensure a safer operation. It can be observed from Table \ref{tab:1} that our proposed method, \textbf{M1}, yields solutions that are of comparable quality to that of \textbf{M2} across different settings while exhibiting superior computational efficiency, rendering itself a computationally enhanced algorithm for CC-VSC-OPF problems.   

Based on local testing, the average CPU time required to solve the OPF problem of 14-bus and 30-bus systems is 0.0170s and 0.0357s, respectively. When it comes to the VSC-OPF problem, this time increases to 0.363s and 4.746s. These appreciable differences suggest scalability issues of larger-scale systems, as will be discussed in Section \ref{large}.
\begin{table}[t!]
\centering
\caption{Performances Of M1 and M2 Under Different Settings in IEEE 14-Bus and 30-Bus Systems}
\label{tab:1}
\begin{threeparttable}
\renewcommand{\arraystretch}{1}
\begin{tabular}{cccccc} 
\toprule
\multicolumn{6}{c}{\textbf{Group 1:}\;$\boldsymbol{\epsilon=5\%}$\quad \textbf{14-bus system}}                     \\ 
\midrule
Method & Iterations & Time [s] & OPF Time [s]\tnote{*} & Cost [\$] & $\epsilon_J$ [\%] \\ 
\hline
\textbf{M1}     & 16          & 13.24 & 5.84 & 5378.741  & 5.09                             \\ 
\textbf{M2}     & 21         & 419.29 & 7.71 & 5378.720  & 5.00                             \\ 
\toprule
\multicolumn{6}{c}{\textbf{Group 2:}\;$\boldsymbol{\epsilon=8\%}$\quad \textbf{14-bus system}}                       \\ 
\midrule
Method & Iterations & Time [s] & OPF Time [s] & Cost [\$] & $\epsilon_J$ [\%]                       \\ 
\hline
\textbf{M1}     & 2          & 1.69 & 0.69 & 5370.110  & 8.00                             \\
\textbf{M2}     & 3          & 60.42 & 1.04 & 5370.084  & 8.00                             \\ 
\toprule
\multicolumn{6}{c}{\textbf{Group 3:}\;$\boldsymbol{\epsilon=5\%}$\quad \textbf{30-bus system}}                     \\ 
\midrule
Method & Iterations & Time [s] & OPF Time [s]\tnote{*} & Cost [\$] & $\epsilon_J$ [\%] \\ 
\hline
\textbf{M1}     & 6          & 32.60 & 28.69 & 356.596  & 5.02                             \\ 
\textbf{M2}     & 10         & 264.80 & 50.08 & 356.598  & 5.00                             \\ 
\toprule
\multicolumn{6}{c}{\textbf{Group 4:}\;$\boldsymbol{\epsilon=8\%}$\quad \textbf{30-bus system}}                       \\ 
\midrule
Method & Iterations & Time [s] & OPF Time [s] & Cost [\$] & $\epsilon_J$ [\%]                       \\ 
\hline
\textbf{M1}     & 6          & 32.24 & 28.26 & 356.395  & 8.03                             \\
\textbf{M2}     & 12          & 316.84 & 59.44 & 356.394  & 8.00                             \\ 
\midrule
\end{tabular}
\begin{tablenotes}
            \item[*] Sum of OPF solving time in all iterations.
        \end{tablenotes}
\end{threeparttable}
\vspace{-15pt}
\end{table}
\subsubsection{Improvement of Voltage Stability}
This subsection highlights the merits of the proposed CC-VSC-OPF model in optimizing voltage stability. A comparative analysis is conducted by adopting the traditional CC-OPF model without the voltage stability constraint. Both models are solved under the same iterative scheme, with a confidence level of $1-\epsilon=0.95$. The resulting probability distributions of \textcolor{black}{the VSI $\sigma$} are illustrated using the split violin plots. As vividly shown in Fig. \ref{fig:4}, prior to the incorporation of the voltage stability constraint, there exists a certain probability that $\sigma$ fails to meet the requisite threshold, which indicates the system's vulnerability to voltage collapse under specific contingencies. 
With the voltage stability constraint, the stability threshold can be satisfied at the predetermined confidence level, emphasizing the significance of integrating the voltage stability constraint within the optimization framework to avoid potential instability issues.

\begin{figure}
\centering
\includegraphics[width=3.3in]{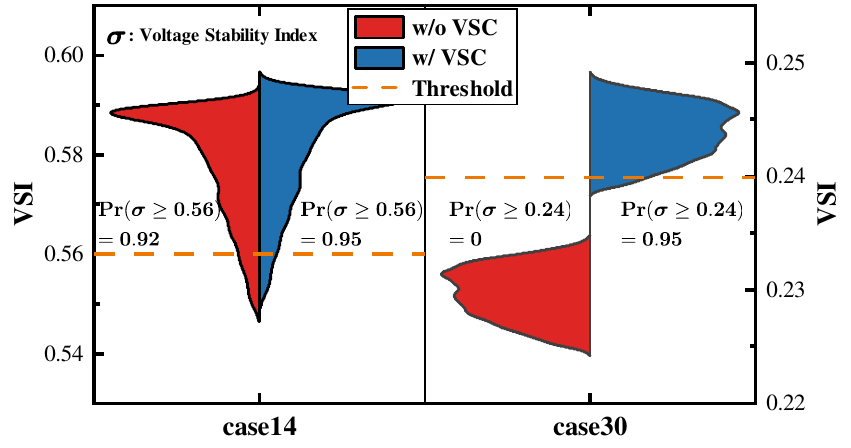}
\caption{\textcolor{black}{Probability distribution of the VSI with and without voltage stability constraint in small cases}}
\label{fig:4}
\end{figure}
\subsubsection{\textcolor{black}{Accuracy of the Tails}} \textcolor{black}{It is also necessary to examine the performance of the APCE method for tail events with much smaller violation probabilities to reflect more stringent reliability criteria, especially for variables ${\boldsymbol{p}, \boldsymbol{q}}$. In this subsection, we will focus on the accuracy of the second-order APCE in approximating quantile values. With the results based on MC simulations as the benchmark, we calculate the root mean square error (RMSE) of quantile values under three different settings of $\epsilon$: $0.1\%, 1\%,$ and $3\%$. The results are presented in Table \ref{tab:4}, where $Q_{\epsilon}^{p/q}$ denotes $\epsilon$-quantile values of corresponding variables.} 

\textcolor{black}{From the test results, we observe that the APCE can consistently deliver an accurate and stable estimation of quantile values. Since the successful application of Algorithm 2 hinges on the accuracy of the APCE approximation, it is anticipated that Algorithm 2 can still perform well even for relatively small $\epsilon$.}
\begin{table}[t!]
\centering
\caption{\textcolor{black}{Accuracy of the APCE for Tails}}
\label{tab:4}
\begin{threeparttable}
\renewcommand{\arraystretch}{1.3}
\begin{tabular}{cccc} 
\toprule
\multicolumn{4}{c}{\textbf{\textcolor{black}{Group 1: 14-bus system}}}                     \\ 
\midrule
\multicolumn{1}{c}{} & \textcolor{black}{$\epsilon=0.1\%$} & \textcolor{black}{$\epsilon=1\%$} & \textcolor{black}{$\epsilon=3\%$} \\ 
\hline
\multicolumn{1}{c|}{\textcolor{black}{RMSE of $Q_{\epsilon}^{p}$ [p.u.]}} & \textcolor{black}{6.74$\times 10^{-5}$} & \textcolor{black}{3.18$\times 10^{-5}$} &  \textcolor{black}{8.24$\times 10^{-6}$}                          \\ 
\multicolumn{1}{c|}{\textcolor{black}{RMSE of $Q_{1-\epsilon}^{p}$ [p.u.]}} & \textcolor{black}{2.00$\times 10^{-4}$} & \textcolor{black}{1.42$\times 10^{-4}$} &  \textcolor{black}{6.39$\times 10^{-5}$}                                                      \\ 
\multicolumn{1}{c|}{\textcolor{black}{RMSE of $Q_{\epsilon}^{q}$ [p.u.]}} & \textcolor{black}{2.41$\times 10^{-4}$} & \textcolor{black}{1.88$\times 10^{-4}$} &  \textcolor{black}{1.04$\times 10^{-4}$}                                                      \\
\multicolumn{1}{c|}{\textcolor{black}{RMSE of $Q_{1-\epsilon}^{q}$ [p.u.]}} & \textcolor{black}{6.10$\times 10^{-4}$} & \textcolor{black}{4.13$\times 10^{-4}$} &  \textcolor{black}{2.35$\times 10^{-5}$}                                                      \\
\toprule
\multicolumn{4}{c}{\textbf{\textcolor{black}{Group 2: 30-bus system}}}                       \\ 
\midrule
\multicolumn{1}{c}{} & \textcolor{black}{$\epsilon=0.1\%$} & \textcolor{black}{$\epsilon=1\%$} & \textcolor{black}{$\epsilon=3\%$} \\ 
\hline
\multicolumn{1}{c|}{\textcolor{black}{RMSE of $Q_{\epsilon}^{p}$ [p.u.]}} & \textcolor{black}{1.78$\times 10^{-4}$} & \textcolor{black}{8.77$\times 10^{-5}$} &  \textcolor{black}{3.75$\times 10^{-5}$}                          \\  
\multicolumn{1}{c|}{\textcolor{black}{RMSE of $Q_{1-\epsilon}^{p}$ [p.u.]}} & \textcolor{black}{7.30$\times 10^{-5}$} & \textcolor{black}{4.70$\times 10^{-5}$} &  \textcolor{black}{1.27$\times 10^{-6}$}                          \\ 
\multicolumn{1}{c|}{\textcolor{black}{RMSE of $Q_{\epsilon}^{q}$ [p.u.]}} & \textcolor{black}{9.86$\times 10^{-5}$} & \textcolor{black}{3.13$\times 10^{-5}$} &  \textcolor{black}{2.71$\times 10^{-5}$}                          \\ 
\multicolumn{1}{c|}{\textcolor{black}{RMSE of $Q_{1-\epsilon}^{p}$ [p.u.]}} & \textcolor{black}{2.15$\times 10^{-4}$} & \textcolor{black}{1.48$\times 10^{-4}$} &  \textcolor{black}{7.14$\times 10^{-5}$}                          \\ 
\midrule
\end{tabular}
\end{threeparttable}
\vspace{-15pt}
\end{table}
\subsection{Real-world test cases}\label{large}
\begin{table*}[t]
    \centering
\caption{Performances Of M1 and M2 With Different PLS Components in 89-Bus and 200-Bus Systems}
\label{tab:2}
\renewcommand{\arraystretch}{1}
\begin{tabular}{cccccccc} 
\toprule
\multicolumn{1}{c|}{}&\multicolumn{4}{c}{89-bus system} & \multicolumn{3}{c}{200-bus system}\\
\multicolumn{1}{c|}{} &\shortstack{\textbf{M1}\\10 Components} &\shortstack{\textbf{M1}\\20 Components} &\shortstack{\textbf{M1}\\w/o PLS} &\multicolumn{1}{c|}{\textbf{M2}} &\shortstack{\textbf{M1}\\25 Components}&\shortstack{\textbf{M1}\\49 Components}&\shortstack{\textbf{M2}\\49 Components} \\
\midrule
\multicolumn{1}{c|}{Iteration} & 8 &8 &8 & \multicolumn{1}{c|}{8} &9&8&8\\
\multicolumn{1}{c|}{Time [s]} & 10.53 &12.20 &895.74 & \multicolumn{1}{c|}{1293.37} &32.21&59.01&1124.15\\
\multicolumn{1}{c|}{OPF Time [s]}&2.49&3.20&887.01&\multicolumn{1}{c|}{880.67} &7.50&38.40&39.72\\
\multicolumn{1}{c|}{$\epsilon_J$ [\%]} &5.02&5.00&5.02&\multicolumn{1}{c|}{5.00}&5.00&5.00&5.00\\
\multicolumn{1}{c|}{RMSE of $p$ [p.u.]}&0.0028&0.0029&8.37$\times10^{-6}$&\multicolumn{1}{c|}{0}&1.80$\times10^{-4}$&6.70$\times10^{-7}$&0 \\
\multicolumn{1}{c|}{RMSE of $v$ [p.u.]}&2.10$\times10^{-5}$&1.97$\times10^{-5}$&2.76$\times10^{-6}$&\multicolumn{1}{c|}{0}&0.0013&1.70$\times10^{-7}$&0 \\
\midrule
\end{tabular}
\vspace{-15pt}
\end{table*}
This case study is conducted on the European 89-bus and Illinois 200-bus systems, with the corresponding voltage stability thresholds being 2.05 and 0.15. Seven 750 MW solar power plants and ten 75 MW solar power plants are added to the above systems, respectively. 
\subsubsection{Effectiveness of PLS-NN Framework}
In this section, we will demonstrate the potential of the PLS-NN framework to overcome scalability issues. The accuracy of the PLS-NN framework is first assessed. To determine the dimension of $\boldsymbol{\upsilon}$ for each system, PLS is initially performed with $2N-1$ components. The percentage variance explained in the response $\sigma$ is plotted against the number of PLS components. As illustrated in Fig. \ref{fig:5}, approximately 99.9\% of the variance in $\sigma$ is explained by the first 20 components in the 89-bus system, while nearly 99.6\% of the variance is accounted for by the first 49 components in the 200-bus system. Accordingly, the dimensions of the latent variables $\boldsymbol{\upsilon}$ are set to 20 and 49 for the two respective systems. 
The performance of resulting reduced surrogate models $\hat{\sigma}(\boldsymbol{\upsilon})$ is presented in Fig. \ref{fig:6}, which demonstrates that the PLS-NN framework is capable of constructing a more parsimonious model while maintaining an acceptable level of accuracy. 
\begin{figure}[!t]
\vspace{-10pt}
\centering
\subfloat[89-bus system]{\includegraphics[width=1.74in]{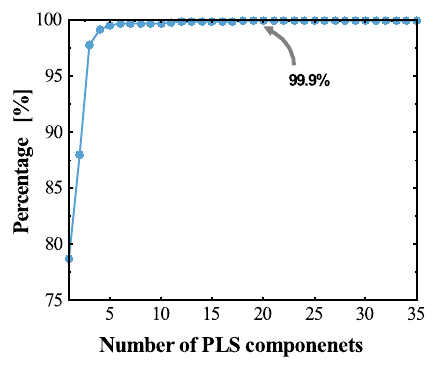}}
\subfloat[200-bus system]{\includegraphics[width=1.74in]{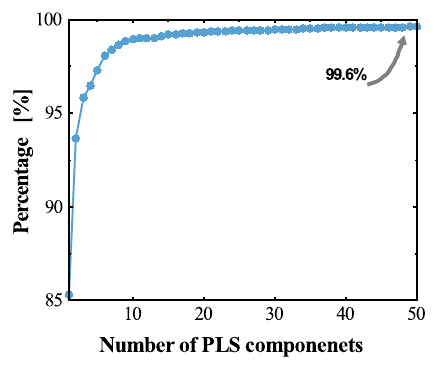}}
\caption{Percent variance explained in the response}
\label{fig:5}
\vspace{-15pt}
\end{figure}
\begin{figure}[!t]
\centering
\subfloat[89-bus system]{\includegraphics[width=1.74in]{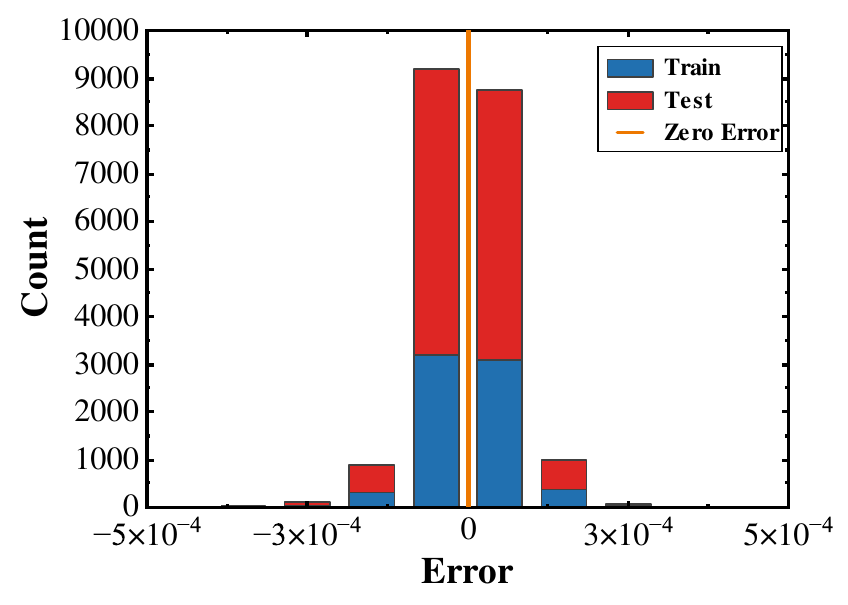}}
\subfloat[200-bus system]{\includegraphics[width=1.74in]{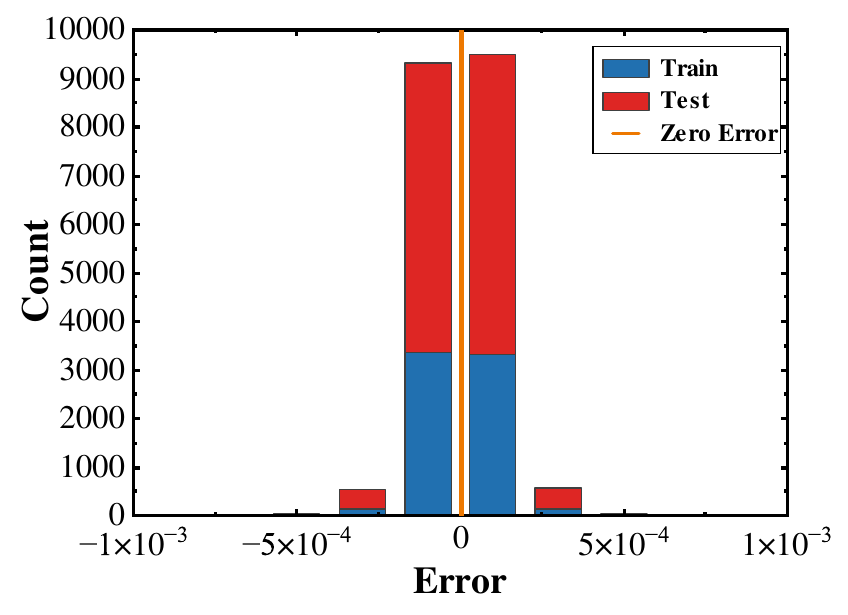}}
\caption{Accuracy of reduced surrogate model}
\label{fig:6}
\vspace{-15pt}
\end{figure}

Next, we aim to elucidate the computational advantages of the proposed framework. Specifically, a comprehensive comparative analysis is conducted, focusing on the solution results under varying numbers of PLS components. To evaluate the performance of our approach, the RMSE of operation points is additionally calculated, using results obtained with the \textbf{M2} method as a benchmark. Note that in the 200-bus system, the deterministic VSC-OPF problem will not be solved within 8h without PLS due to the substantial computational burden involved in evaluating the 399$\times$399 Hessian matrix. Consequently, the benchmark results are calculated with 49 PLS components. As clearly contrasted in Table \ref{tab:2}, the inclusion of PLS significantly enhances computational efficiency. In the 89-bus system, the average CPU time for solving the OPF problem exceeds a hundred seconds without PLS. Conversely, upon incorporating PLS, this time diminishes to a mere fraction of a second. This efficiency gain is achieved by evaluating much smaller Hessian matrices (10$\times$10 or 20$\times$20), as opposed to the larger 177$\times$177 matrix. A comparable improvement is observed in the 200-bus system, wherein the overall CPU time can be reduced to less than one minute. 

\textcolor{black}{When comparing the results of the 89-bus system to those of the 30-bus system, it is revealed that the former requires more iteration steps but has a significantly lower solving time. This seemingly contradictory outcome can be attributed to the utilization of PLS in the 89-bus system. Specifically, the average time for each VSC-OPF solution in the 89-bus system is 0.31s (10 components) or 0.40s (20 components), which is much lower than the approximately 4.7s required for the 30-bus system. The efficiency improvement is credited to the reduced computational burden of calculating a 10 $\times$ 10 or 20 $\times$ 20 Hessian matrix, compared to the 59 $\times$ 59 Hessian matrix in the 30-bus system. As a result, under the PLS-NN framework, the solution time for the 89-bus system is significantly lower than that of the 30-bus system, due to the reduced OPF solving time. This result has once again verified the effectiveness of PLS in the proposed solution framework to mitigate computational complexity.}

Moreover, by examining the RMSE of operation points obtained with different PLS components, it is evident that the solutions achieved with PLS are all in close proximity to the benchmark results. Besides, the distributions of the VSI before and after considering the voltage stability constraint are depicted in Fig. \ref{fig:7}, which verifies the capability of the proposed model to improve voltage stability.
This demonstrates that the proposed algorithm, when combined with PLS, can effectively handle the CC-VSC-OPF problem in large-scale systems while consistently delivering satisfactory results. 
\begin{figure}
\centering
\includegraphics[width=3.3in]{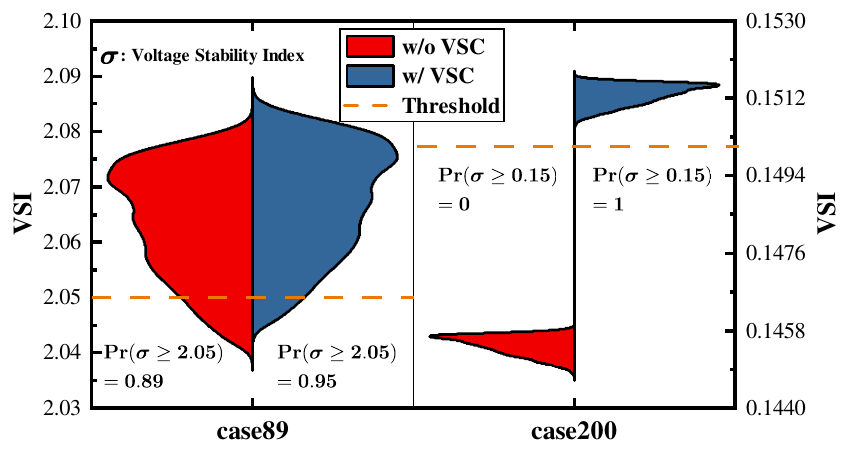}
\caption{\textcolor{black}{Probability distribution of the VSI with and without voltage stability constraint in real-world cases}}
\label{fig:7}
\vspace{-15pt}
\end{figure}
\subsubsection{Comparison of Different Settings of DD-APCE}
As mentioned in Section \ref{DD-APCE}, the efficacy of the proposed APCE diminishes as the dimensionality of uncertainty increases. In this section, we advocate the DD-APCE when faced with high-dimensional random variables. Four different parameter settings are considered in the 200-bus system: (\textbf{S1}) $s=1,m=2$; (\textbf{S2}) $s=2,m=2$; (\textbf{S3}) $s=1,m=3$; (\textbf{S4}) $s=3,m=3$. Fig. \ref{fig:8} illustrates the probability distribution of $\sigma$ obtained under each setting, demonstrating that all four settings accurately characterize uncertainty. This conclusion is further supported by Table \ref{tab:3}, where the RMSE is computed against the benchmark result obtained with the \textbf{M2} method. While higher interaction orders yield more accurate results, they also entail a larger number of basis functions and, consequently, longer evaluation time. In contrast, in this case, settings \textbf{S1} and \textbf{S3} can expedite the determination of expansion coefficients by restricting the interaction order while maintaining sufficient accuracy. 
\begin{figure}[!t]
\centering
\includegraphics[width=3.4in]{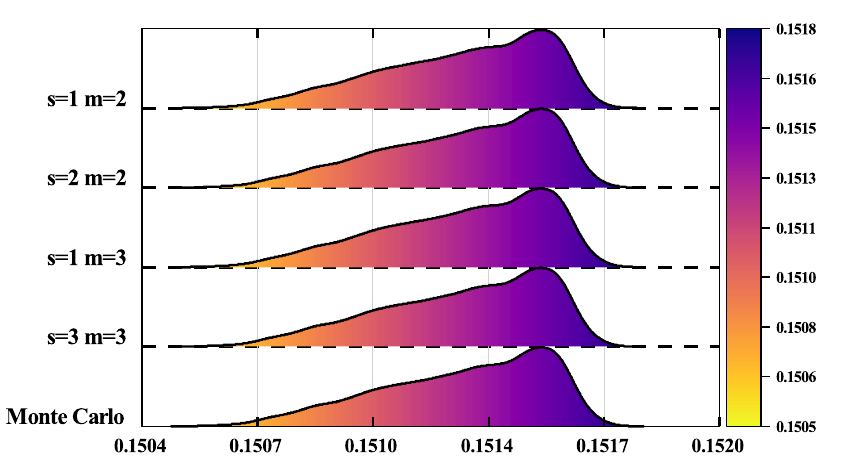}
\caption{Probability distribution of the voltage stability index under different settings of DD-APCE}
\label{fig:8}
\vspace{-15pt}
\end{figure}
\begin{table}[!t]
    \centering
        \caption{Comparison of Different Parameter Settings of DD-APCE in the 200-bus system}
    \label{tab:3}
    \renewcommand{\arraystretch}{1}
    \begin{threeparttable}
    \begin{tabular}{c|cccc}
    \toprule
    &\textbf{S1}&\textbf{S2}&\textbf{S3}&\textbf{S4} \\
    \midrule
    No. of Basis Functions&21&66&31&286\\
    Evaluation Time [s]\tnote{*}&1.23&2.58&1.55&32.14\\
    $\epsilon_J$ [\%]&5.00&5.00&5.00&5.00\\
    RMSE of $p$ [$10^{-5}$p.u.]&1.3&0.067&0.34&0.045\\
    RMSE of $v$ [$10^{-6}$p.u.]&2.7&0.17&6.0&0.067\\
    \midrule
    \end{tabular}
    \begin{tablenotes}
            \item[*] The time to determine APCE coefficients per iteration.
        \end{tablenotes}
\end{threeparttable}
\vspace{-15pt}
\end{table}
The difference in the number of basis functions and evaluation time will become more pronounced as the dimension increases. For instance, when $N=50$, \textbf{S2} involves 1326 basis functions, whereas \textbf{S1} requires only 101.
Therefore, DD-APCE is expected to be markedly more effective in dealing with high-dimensional cases, provided that a lower-variate truncation is adequate for uncertainty propagation.
\subsection{\textcolor{black}{Further Discussions}}
\subsubsection{\textcolor{black}{Stability Guarantee with Approximation Errors}}
\textcolor{black}{
    Although the NN and the truncated APCE metamodels have exhibited high precision for voltage stability index approximation, they will inevitably introduce approximation errors that will potentially undermine voltage stability. Under this circumstance, we further propose a calibration process to accommodate the approximation errors. Specifically, for the error caused by the NN model, the largest empirical prediction error $\rho$ can be obtained after model training. As for the truncation error arising from the APCE, existing literature has derived the upper bound or exact expression \cite{fag2011on,mu2018comment}, which is denoted as $\delta$ here. Hence, the following changes can be applied to Algorithm 2:\\     
    (1) When solving the deterministic VSC-OPF at step 8 and step 18, the voltage stability constraint should be replaced by:
    \begin{equation}
            \sigma(\boldsymbol{z})-\rho\geq\sigma^{\min}.
    \end{equation}
    (2) When evaluating chance constraint satisfaction at step 12, the tightening margin $\Delta \sigma_{\min}^{(k)}$ should be updated as:
    \begin{equation}
        \Delta \sigma_{\min}^{(k)}=\left[\sigma^{\min}-Q_{\epsilon_{\sigma}}^{(k)}\left(\sigma\right)+\delta\right]^+.
    \end{equation}
    Based on the above minor adjustments, voltage stability can be guaranteed in the presence of approximation errors.}
    \subsubsection{\textcolor{black}{Extension to Joint Chance-Constrained (JCC) Problems}}
    \textcolor{black}{The incorporation of joint chance constraints into OPF can provide much stronger guarantees of system security. It is of interest to explore the feasibility of extending the proposed solution framework to address JCC OPF. As can be seen from equation (23), the current framework encounters difficulties when being applied to JCC OPF due to the absence of violation probabilities for individual chance constraints, which precludes the calculation of tightening margins for each variable. Nevertheless, a preprocessing step can render the framework amenable to JCC OPF problems. By leveraging Boole's inequality \cite{boole_2009} or an improved version devised by Baker and Bernstein
 \cite{kyri2019joint}, the joint chance constraint can be reduced into multiple single chance constraints. This transformation enables the subsequent application of the proposed methodology to the transformed single chance-constrained OPF problem, as we have considered in this work. Note that, this transformation can be overly conservative and lead to uneconomic operation strategies.}
\section{Conclusion}\label{conclusion}
In this paper, a CC-VSC-OPF model is proposed to enhance system stability in the presence of arbitrarily distributed and complexly correlated uncertainties. To address the inherent intractability of this model, a two-fold strategy is employed. Firstly, the implicit voltage stability constraint is explicitly embedded by harnessing the NN-based surrogate model. Secondly, to avoid intensive numerical simulations, chance constraints are tackled in a two-step procedure, in which the first step is to \textcolor{black}{quantify} uncertainty at a fixed operation point using the APCE, and the second step is to iteratively update the operation point to meet desired requirements. Moreover, a PLS-NN framework, along with a variant of APCE, namely DD-APCE, is introduced to enhance the versatility of the proposed solution scheme in large-scale systems and high-dimensional uncertainty scenarios. This integration significantly expands the applicability of the scheme with only a minor compromise in accuracy. Through comprehensive case studies, the proposed solution scheme proves to be computationally enhanced and accurate enough for practical applications.

\textcolor{black}{It should be acknowledged that DD-APCE alone may still encounter the curse of dimensionality in cases where a higher-order expansion is necessary to accurately characterize system variables and the input dimension is high. In the future, we wish to enhance the capability of APCE in high-dimensional cases by, for example, applying dimension reduction techniques to input random variables. Meanwhile, we envision our future work to expand the proposed algorithm to JCC problems, which are more practical for engineering applications.} 

\ifCLASSOPTIONcaptionsoff
  \newpage
\fi

\bibliographystyle{IEEEtran}

\bibliography{main}

\begin{thebibliography}{10}
\providecommand{\url}[1]{#1}
\csname url@samestyle\endcsname
\providecommand{\newblock}{\relax}
\providecommand{\bibinfo}[2]{#2}
\providecommand{\BIBentrySTDinterwordspacing}{\spaceskip=0pt\relax}
\providecommand{\BIBentryALTinterwordstretchfactor}{4}
\providecommand{\BIBentryALTinterwordspacing}{\spaceskip=\fontdimen2\font plus
\BIBentryALTinterwordstretchfactor\fontdimen3\font minus \fontdimen4\font\relax}
\providecommand{\BIBforeignlanguage}[2]{{%
\expandafter\ifx\csname l@#1\endcsname\relax
\typeout{** WARNING: IEEEtran.bst: No hyphenation pattern has been}%
\typeout{** loaded for the language `#1'. Using the pattern for}%
\typeout{** the default language instead.}%
\else
\language=\csname l@#1\endcsname
\fi
#2}}
\providecommand{\BIBdecl}{\relax}
\BIBdecl

\bibitem{wang2023two}
J.~Wang, N.~Xie, C.~Huang, and Y.~Wang, ``Two-stage stochastic-robust model for the self-scheduling problem of an aggregator participating in energy and reserve markets,'' \emph{Protection and Control of Modern Power Systems}, vol.~8, no.~1, p.~45, 2023.

\bibitem{yang2023tracking}
J.~Yang, T.~Yang, L.~Luo, and L.~Peng, ``Tracking-dispatch of a combined wind-storage system based on model predictive control and two-layer fuzzy control strategy,'' \emph{Protection and Control of Modern Power Systems}, vol.~8, no.~1, p.~58, 2023.

\bibitem{qiu2020historical}
H.~Qiu, W.~Gu, X.~Xu, G.~Pan, P.~Liu, Z.~Wu, and L.~Wang, ``A historical-correlation-driven robust optimization approach for microgrid dispatch,'' \emph{IEEE Transactions on Smart Grid}, vol.~12, no.~2, pp. 1135--1148, 2020.

\bibitem{pena2020solving}
A.~Pe{\~n}a-Ordieres, J.~R. Luedtke, and A.~W{\"a}chter, ``Solving chance-constrained problems via a smooth sample-based nonlinear approximation,'' \emph{SIAM Journal on Optimization}, vol.~30, no.~3, pp. 2221--2250, 2020.

\bibitem{zhang2010probabilistic}
H.~Zhang and P.~Li, ``Probabilistic analysis for optimal power flow under uncertainty,'' \emph{IET Generation, Transmission \& Distribution}, vol.~4, no.~5, pp. 553--561, 2010.

\bibitem{du2021chance}
X.~Du, X.~Lin, Z.~Peng, S.~Peng, J.~Tang, and W.~Li, ``Chance-constrained optimal power flow based on a linearized network model,'' \emph{International Journal of Electrical Power \& Energy Systems}, vol. 130, p. 106890, 2021.

\bibitem{yang2020analytical}
Y.~Yang, W.~Wu, B.~Wang, and M.~Li, ``Analytical reformulation for stochastic unit commitment considering wind power uncertainty with gaussian mixture model,'' \emph{IEEE Transactions on Power Systems}, vol.~35, no.~4, pp. 2769--2782, 2020.

\bibitem{roald2018chance}
L.~Roald and G.~Andersson, ``Chance-constrained ac optimal power flow: Reformulations and efficient algorithms,'' \emph{IEEE Transactions on Power Systems}, vol.~33, no.~3, pp. 2906--2918, 2018.

\bibitem{li2019distribution}
B.~Li, R.~Jiang, and J.~L. Mathieu, ``Distributionally robust chance-constrained optimal power flow assuming unimodal distributions with misspecified modes,'' \emph{IEEE Transactions on Control of Network Systems}, vol.~6, no.~3, pp. 1223--1234, 2019.

\bibitem{arab2022distributionally}
A.~Arab and J.~E. Tate, ``Distributionally robust optimal power flow via ellipsoidal approximation,'' \emph{IEEE Transactions on Power Systems}, pp. 1--13, 2022.

\bibitem{arrigo2022embed}
A.~Arrigo, J.~Kazempour, Z.~De~Grève, J.-F. Toubeau, and F.~Vallée, ``Embedding dependencies between wind farms in distributionally robust optimal power flow,'' \emph{IEEE Transactions on Power Systems}, pp. 1--14, 2022.

\bibitem{chen2023deep}
G.~Chen, H.~Zhang, H.~Hui, and Y.~Song, ``Deep-quantile-regression-based surrogate model for joint chance-constrained optimal power flow with renewable generation,'' \emph{IEEE Transactions on Sustainable Energy}, vol.~14, no.~1, pp. 657--672, 2023.

\bibitem{li2018compressive}
J.~Li, N.~Ou, G.~Lin, and W.~Wei, ``Compressive sensing based stochastic economic dispatch with high penetration renewables,'' \emph{IEEE Transactions on Power Systems}, vol.~34, no.~2, pp. 1438--1449, 2019.

\bibitem{muhlpfordt2019chance}
T.~M{\"u}hlpfordt, L.~Roald, V.~Hagenmeyer, T.~Faulwasser, and S.~Misra, ``Chance-constrained ac optimal power flow: A polynomial chaos approach,'' \emph{IEEE Transactions on Power Systems}, vol.~34, no.~6, pp. 4806--4816, 2019.

\bibitem{metivier2020efficient}
D.~M{\'e}tivier, M.~Vuffray, and S.~Misra, ``Efficient polynomial chaos expansion for uncertainty quantification in power systems,'' \emph{Electric Power Systems Research}, vol. 189, p. 106791, 2020.

\bibitem{xu2021iterative}
Y.~Xu, M.~Korkali, L.~Mili, J.~Valinejad, T.~Chen, and X.~Chen, ``An iterative response-surface-based approach for chance-constrained ac optimal power flow considering dependent uncertainty,'' \emph{IEEE Transactions on Smart Grid}, vol.~12, no.~3, pp. 2696--2707, 2021.

\bibitem{ye2022generalized}
K.~Ye, J.~Zhao, Y.~Zhang, X.~Liu, and H.~Zhang, ``A generalized computationally efficient copula-polynomial chaos framework for probabilistic power flow considering nonlinear correlations of pv injections,'' \emph{International Journal of Electrical Power \& Energy Systems}, vol. 136, p. 107727, 2022.

\bibitem{ly2022scalable}
S.~Ly, P.~Pareek, and H.~D. Nguyen, ``Scalable probabilistic optimal power flow for high renewables using lite polynomial chaos expansion,'' \emph{IEEE Systems Journal}, vol.~17, no.~2, pp. 2282--2293, 2023.

\bibitem{lee2022reliability}
D.~Lee and S.~Rahman, ``Reliability-based design optimization under dependent random variables by a generalized polynomial chaos expansion,'' \emph{Structural and multidisciplinary optimization}, vol.~65, pp. 1--29, 2022.

\bibitem{jia2023voltage}
H.~Jia, Q.~Hou, P.~Yong, Y.~Liu, N.~Zhang, D.~Liu, and M.~Hou, ``Voltage stability constrained operation optimization: an ensemble sparse oblique regression tree method,'' \emph{IEEE Transactions on Power Systems}, pp. 1--13, 2023.

\bibitem{cui2018new}
B.~Cui and X.~A. Sun, ``A new voltage stability-constrained optimal power-flow model: Sufficient condition, socp representation, and relaxation,'' \emph{IEEE Transactions on Power Systems}, vol.~33, no.~5, pp. 5092--5102, 2018.

\bibitem{avalos2008practical}
R.~J. Avalos, C.~A. Canizares, and M.~F. Anjos, ``A practical voltage-stability-constrained optimal power flow,'' in \emph{2008 IEEE Power and Energy Society General Meeting - Conversion and Delivery of Electrical Energy in the 21st Century}, 2008, pp. 1--6.

\bibitem{wang2023stability}
J.~Wang, Y.~Song, D.~J. Hill, Y.~Hou, and F.~Fan, ``Stability constrained opf in microgrids: A chance constrained optimization framework with non-gaussian uncertainty,'' \emph{arXiv preprint arXiv:2302.02168}, 2023.

\bibitem{yang2022optimization}
Y.~Yang, S.~Lin, Q.~Wang, Y.~Xie, M.~Liu, and Q.~Li, ``Optimization of static voltage stability margin considering uncertainties of wind power generation,'' \emph{IEEE Transactions on Power Systems}, vol.~37, no.~6, pp. 4525--4540, 2022.

\bibitem{lee2020practical}
D.~Lee and S.~Rahman, ``Practical uncertainty quantification analysis involving statistically dependent random variables,'' \emph{Applied Mathematical Modelling}, vol.~84, pp. 324--356, 2020.

\bibitem{glover2012power}
J.~D. Glover, M.~S. Sarma, and T.~Overbye, \emph{Power system analysis \& design, SI version}.\hskip 1em plus 0.5em minus 0.4em\relax Cengage Learning, 2012.

\bibitem{bompard1996dynamic}
E.~Bompard, E.~Carpaneto, G.~Chicco, and R.~Napoli, ``A dynamic interpretation of the load-flow jacobian singularity for voltage stability analysis,'' \emph{International Journal of Electrical Power \& Energy Systems}, vol.~18, no.~6, pp. 385--395, 1996.

\bibitem{lof1992fast}
P.-A. Lof, T.~Smed, G.~Andersson, and D.~Hill, ``Fast calculation of a voltage stability index,'' \emph{IEEE Transactions on Power Systems}, vol.~7, no.~1, pp. 54--64, 1992.

\bibitem{10.5555/2969442.2969497}
T.~Hofmann, A.~Lucchi, S.~Lacoste-Julien, and B.~McWilliams, ``Variance reduced stochastic gradient descent with neighbors,'' in \emph{Proceedings of the 28th International Conference on Neural Information Processing Systems - Volume 2}, ser. NIPS'15.\hskip 1em plus 0.5em minus 0.4em\relax Cambridge, MA, USA: MIT Press, 2015, p. 2305–2313.

\bibitem{venzke2020inexact}
A.~Venzke, S.~Chatzivasileiadis, and D.~K. Molzahn, ``Inexact convex relaxations for ac optimal power flow: Towards ac feasibility,'' \emph{Electric Power Systems Research}, vol. 187, p. 106480, 2020.

\bibitem{liu2019acopf}
Y.~Liu, J.~Li, and L.~Wu, ``Acopf for three-phase four-conductor distribution systems: semidefinite programming based relaxation with variable reduction and feasible solution recovery,'' \emph{IET Generation, Transmission \& Distribution}, vol.~13, no.~2, pp. 266--276, 2019.

\bibitem{mhanna2018adaptive}
S.~Mhanna, G.~Verbi{\v{c}}, and A.~C. Chapman, ``Adaptive admm for distributed ac optimal power flow,'' \emph{IEEE Transactions on Power Systems}, vol.~34, no.~3, pp. 2025--2035, 2018.

\bibitem{feinberg2018multivariate}
J.~Feinberg, V.~G. Eck, and H.~P. Langtangen, ``Multivariate polynomial chaos expansions with dependent variables,'' \emph{SIAM Journal on Scientific Computing}, vol.~40, no.~1, pp. A199--A223, 2018.

\bibitem{xiu2002wiener}
D.~Xiu and G.~E. Karniadakis, ``The wiener--askey polynomial chaos for stochastic differential equations,'' \emph{SIAM journal on scientific computing}, vol.~24, no.~2, pp. 619--644, 2002.

\bibitem{oladyshkin2012data}
S.~Oladyshkin and W.~Nowak, ``Data-driven uncertainty quantification using the arbitrary polynomial chaos expansion,'' \emph{Reliability Engineering \& System Safety}, vol. 106, pp. 179--190, 2012.

\bibitem{rahman2018polynomial}
S.~Rahman, ``A polynomial chaos expansion in dependent random variables,'' \emph{Journal of Mathematical Analysis and Applications}, vol. 464, no.~1, pp. 749--775, 2018.

\bibitem{lee2023high}
D.~Lee and S.~Rahman, ``High-dimensional stochastic design optimization under dependent random variables by a dimensionally decomposed generalized polynomial chaos expansion,'' \emph{International journal for uncertainty quantification}, 2023.

\bibitem{de1993simpls}
S.~De~Jong, ``Simpls: an alternative approach to partial least squares regression,'' \emph{Chemometrics and intelligent laboratory systems}, vol.~18, no.~3, pp. 251--263, 1993.

\bibitem{brust2022con}
J.~J. Brust and M.~Anitescu, ``Convergence analysis of fixed point chance constrained optimal power flow problems,'' \emph{IEEE Transactions on Power Systems}, vol.~37, no.~6, pp. 4191--4201, 2022.

\bibitem{zimmer2011matpower}
R.~D. Zimmerman, C.~E. Murillo-Sánchez, and R.~J. Thomas, ``Matpower: Steady-state operations, planning, and analysis tools for power systems research and education,'' \emph{IEEE Transactions on Power Systems}, vol.~26, no.~1, pp. 12--19, 2011.

\bibitem{pfenninger2016long}
S.~Pfenninger and I.~Staffell, ``Long-term patterns of european pv output using 30 years of validated hourly reanalysis and satellite data,'' \emph{Energy}, vol. 114, pp. 1251--1265, 2016.

\bibitem{moustapha2022active}
M.~Moustapha, S.~Marelli, and B.~Sudret, ``Active learning for structural reliability: Survey, general framework and benchmark,'' \emph{Structural Safety}, vol.~96, p. 102174, 2022.

\bibitem{klink2022efficient}
D.~Klink, P.~Meyer, and W.~Steyn, ``Efficient yield estimation of multiband patch antennas using nlpls-based pce,'' \emph{IEEE Transactions on Antennas and Propagation}, vol.~70, no.~8, pp. 7037--7045, 2022.

\bibitem{peng2021multi}
X.~Peng, T.~Ye, J.~Li, H.~Wu, S.~Jiang, and G.~Chen, ``Multi-scale uncertainty quantification of composite laminated plate considering random and interval variables with data driven pce method,'' \emph{Mechanics of Advanced Materials and Structures}, vol.~28, no.~23, pp. 2429--2439, 2021.

\bibitem{fag2011on}
L.~Fagiano, M.~Khammash, and C.~Novara, ``On the guaranteed accuracy of polynomial chaos expansions,'' in \emph{2011 50th IEEE Conference on Decision and Control and European Control Conference}, 2011, pp. 728--733.

\bibitem{mu2018comment}
T.~Mühlpfordt, R.~Findeisen, V.~Hagenmeyer, and T.~Faulwasser, ``Comments on truncation errors for polynomial chaos expansions,'' \emph{IEEE Control Systems Letters}, vol.~2, no.~1, pp. 169--174, 2018.

\bibitem{boole_2009}
G.~Boole, \emph{An Investigation of the Laws of Thought: On Which Are Founded the Mathematical Theories of Logic and Probabilities}, ser. Cambridge Library Collection - Mathematics.\hskip 1em plus 0.5em minus 0.4em\relax Cambridge University Press, 2009.

\bibitem{kyri2019joint}
K.~Baker and A.~Bernstein, ``Joint chance constraints in ac optimal power flow: Improving bounds through learning,'' \emph{IEEE Transactions on Smart Grid}, vol.~10, no.~6, pp. 6376--6385, 2019.

\end{thebibliography}
\end{document}